\newtheorem{corollary}{Corollary}
\newtheorem{lemma}{Lemma}
\newcommand{\cC}{\mathcal{C}}
\newcommand{\cL}{\mathcal{L}}
\newcommand{\norm}[1]{\left\|#1\right\|}
\newcommand{\argmin}{\mathop{\mathrm{argmin}}}
\newcommand{\cCpow}[1]{\cC^{\otimes #1}}
\newcommand{\PP}{\mathsf{P}}
\newcommand{\EE}{\mathop{\mathsf{E}}}
\title{High-dimensional Expansion of Product Codes is Stronger than Robust and Agreement Testability}
\author{Gleb~Kalachev\thanks{Gleb~Kalachev is with the Faculty of Mechanics and Mathematics, Moscow State University, Moscow, Russia.%
}}
\begin{document}

\newcommand{\F}{\mathbb{F}}
\newcommand{\RR}{\mathbb{R}}
\newcommand{\FF}{\mathbb{F}}
\newcommand{\ZZ}{\mathbb{Z}}
\newcommand{\NN}{\mathbb{N}}
\newcommand{\CC}{\mathbb{C}}
\newcommand{\RX}{R}

\newcommand{\absbr}[1]{\left|#1\right|}
\newcommand{\rbr}[1]{\left(#1\right)}
\newcommand{\sbr}[1]{\left[#1\right]}
\newcommand{\abr}[1]{\left\langle#1\right\rangle}
\newcommand{\floorbr}[1]{\left\lfloor #1\right\rfloor}
\newcommand{\ceilbr}[1]{\left\lceil #1\right\rceil}
\newcommand{\fbr}[1]{\left\{#1\right\}}

\newcommand{\bigabsbr}[1]{\bigl|#1\bigr|}
\newcommand{\bigrbr}[1]{\bigl(#1\bigr)}
\newcommand{\bigsbr}[1]{\bigl[#1\bigr]}
\newcommand{\bigfbr}[1]{\bigl\{#1\bigr\}}
\newcommand{\bigabr}[1]{\bigl\langle#1\bigr\rangle}
\newcommand{\bigfloorbr}[1]{\bigl\lfloor #1\bigr\rfloor}
\newcommand{\bigceilbr}[1]{\bigl\lceil #1\bigr\rceil}

\newcommand{\biggabsbr}[1]{\biggl|#1\biggr|}
\newcommand{\biggrbr}[1]{\biggl(#1\biggr)}
\newcommand{\biggsbr}[1]{\biggl[#1\biggr]}
\newcommand{\biggfbr}[1]{\biggl\{#1\biggr\}}
\newcommand{\biggabr}[1]{\biggl\langle#1\biggr\rangle}
\newcommand{\biggfloorbr}[1]{\biggl\lfloor #1\biggr\rfloor}
\newcommand{\biggceilbr}[1]{\biggl\lceil #1\biggr\rceil}

\newcommand{\Bigabsbr}[1]{\Bigl|#1\Bigr|}
\newcommand{\Bigrbr}[1]{\Bigl(#1\Bigr)}
\newcommand{\Bigsbr}[1]{\Bigl[#1\Bigr]}
\newcommand{\Bigfbr}[1]{\Bigl\{#1\Bigr\}}
\newcommand{\Bigabr}[1]{\Bigl\langle#1\Bigr\rangle}
\newcommand{\Bigfloorbr}[1]{\Bigl\lfloor #1\Bigr\rfloor}
\newcommand{\Bigceilbr}[1]{\Bigl\lceil #1\Bigr\rceil}

\newcommand{\Biggabsbr}[1]{\Biggl|#1\Biggr|}
\newcommand{\Biggrbr}[1]{\Biggl(#1\Biggr)}
\newcommand{\Biggsbr}[1]{\Biggl[#1\Biggr]}
\newcommand{\Biggfbr}[1]{\Biggl\{#1\Biggr\}}
\newcommand{\Biggabr}[1]{\Biggl\langle#1\Biggr\rangle}
\newcommand{\Biggfloorbr}[1]{\Biggl\lfloor #1\Biggr\rfloor}
\newcommand{\Biggceilbr}[1]{\Biggl\lceil #1\Biggr\rceil}

\newcommand{\la}{\leftarrow}
\newcommand{\La}{\Leftarrow}
\newcommand{\ra}{\rightarrow}
\newcommand{\ov}[1]{\overline{#1}}
\newcommand{\Ra}{\Rightarrow}
\newcommand{\wt}{\widetilde}
\newcommand{\half}[1][1]{\frac{#1}{2}}
\newcommand{\Array}[2]{\begin{array}{#1}#2\end{array}}
\newcommand{\alignR}[1]{\begin{flushright}#1\end{flushright}}
\newcommand{\alignC}[1]{\begin{center}#1\end{center}}
\newcommand{\alignL}[1]{\begin{flushleft}#1\end{flushleft}}
\newcommand{\eqsys}[2]{\left\{\Array{l}{#1\\#2}\right.}
\newcommand{\eqsysc}[4]{\left\{\Array{ll}{#1&#2\\#3&#4}\right.}
\newcommand{\eqsysccc}[6]{\left\{\Array{ll}{#1&#2\\#3&#4\\#5&#6}\right.}
\newcommand{\eqsyscccc}[8]{\left\{\Array{ll}{#1&#2\smallskip\\#3&#4\smallskip\\#5&#6\smallskip\\#7&#8}\right.}
\newcommand{\eps}{\varepsilon}
\newcommand{\ds}{\displaystyle}
\newcommand{\sign}{\mathop{\rm sign}\nolimits}
\newcommand{\tr}{\kappa}
\newcommand{\yy}[1][y]{\frac{#1}{\sqrt{1+#1^2}}}
\newcommand{\uplim}{\mathop{\overline{\lim}}\limits}
\newcommand{\downlim}{\mathop{\underline{\lim}}\limits}
\newcommand{\llim}[1]{\lim\limits_{#1\ra\infty}}
\newcommand{\upllim}[1]{\uplim_{#1\ra\infty}}
\newcommand{\downllim}[1]{\downlim_{#1\ra\infty}}
\renewcommand{\phi}{\varphi}
\newcommand{\teta}{\theta}
\newcommand{\comment}[2][red]{\textcolor{#1}{/*#2*/}}
\newcommand{\cc}[1]{#1^{\bot}}

\let\le\leqslant
\let\ge\geqslant
\let\eps\varepsilon
\let\phi\varphi
\let\kappa\varkappa
\let\si\sigma
\let\Om\Omega
\let\geq\geqslant
\let\leq\leqslant
\let\emptyset\varnothing

\newcommand{\CSS}{\mathrm{CSS}}

\newcommand{\rank}{\mathop{\mathrm{rank}}}
\newcommand{\im}{\mathop{\mathrm{im}}}
\newcommand{\underbr}[2]{\vphantom{#1}\smash{\underbrace{#1}_{#2}}}

\theoremstyle{definition}
\newtheorem*{definition}{Definition}

\maketitle
\begin{abstract}
    We study the coboundary expansion property of product codes called product expansion, which played a key role in all recent constructions of good qLDPC codes. It was shown before that this property is equivalent to robust testability and agreement testability for products of two codes with linear distance. First, we show that robust testability for product of many codes with linear distance is equivalent to agreement testability. Second, we provide an example of product of three codes with linear distance which is robustly testable but not product expanding.
\end{abstract}
\section{Introduction}
Recent constructions of asymptotically good locally testable codes (LTC) and quantum LDPC (qLDPC) codes \cite{Panteleev&Kalachev:stoc2022,Dinur:stoc2022,Dinur:decoders,Leverrier:focs2022,Leverrier:qldpcdecoder:2023,Leverrier:2023,Gu:stoc2023:qpdpc-decoder} use a special property of product codes that has several names and definitions: robust testability, agreement testability, and product expansion. It was shown \cite[Lemma 2.9]{Dinur:stoc2022}, \cite[Lemma 1]{PK2022robust} that these definitions are essentially equivalent in the case of the product of two codes. For all known constructions of good qLDPC codes, this property is necessary to get the linear distance and efficient decoders for them. For LTCs there is one exception: in \cite{Lin2022:losslessLTC} the construction of LTC codes is based on one-sided lossless expanders and does not require local codes satisfying specific property.

In \cite[Appendix B]{PK2022robust} it was shown that product expansion can be understood as a form of high-dimensional expansion called coboundary expansion (for 2-dimensional case see also \cite[Section 2.6]{Dinur:decoders}).
Thus, it seems to be an important property of the product code, as well as robust and agreement testability. Moreover, as product expansion is a form of high-dimensional expansion, it is likely to be useful to construct high-dimensional analogs of codes from \cite{Dinur:stoc2022,Panteleev&Kalachev:stoc2022} which could potentially give good quantum locally testable codes (qLTC). 

Also, in \cite[Lemma 1]{PK2022robust} it was shown that product expansion for a pair of codes coincides with agreement testability with the same constant (see also \cite[Section 2.6]{Dinur:decoders}).
The goal of this paper is to clarify the relation between robust testability, agreement testability, and product expansion for the product of more than two codes.
In particular, we consider a natural generalization of agreement testability for product of multiple codes and show that in the case of the product of 3 or more codes: 1) product expansion is different from robust and agreement testability; 2) agreement testability is equivalent to robustness of the axis-parallel line test up to a constant factor.

\subsection{Product expansion}
Here we will give the definition of product expansion from \cite{PK2022robust}. 
The history and relation with other forms of this definition can also be found in \cite{PK2022robust}.
Given linear codes $\cC_1,\dots,\cC_m$ over $\F_q$ we can define the (\emph{tensor}) \emph{product code} 
\[
\cC_1\otimes\dots\otimes\cC_m := \{c\in \F_q^{n_1\times\dots\times n_m} \mid \forall i\in [m]\ \forall \ell\in \cL_i\colon c|_\ell\in  \cC_i\},
\]
where $\F_q^{n_1\times\dots\times n_m}$ is the set of functions $c\colon[n_1]\times \dots\times [n_m] \to \F_q$  and $\cL_i$ is the set of lines parallel to the $i$-th axis in the $m$-dimensional grid $[n_1]\times \dots\times [n_m]$, i.e., 
\[
\cL_i := \{\{x + s\cdot e_i \mid s \in [n_i] \} \mid x\in [n_1]\times \dots\times [n_m], x_i = 0 \}.
\]
Here $e_i$ denotes the vector $(0,\dots,0,1,0\dots,0) \in [n_1]\times \dots \times [n_m]$ with $1$ at the $i$-th position. 

As in \cite{PK2022robust}, for linear codes $\cC_1\subseteq \F_q^{n_1}$, $\cC_2\subseteq \F_q^{n_2}$ we denote by $\cC_1\boxplus \cC_2$ the code $(\cC_1^\bot\otimes \cC_2^\bot)^\bot=\cC_1\otimes \F_q^{n_2}+\F_q^{n_1}\otimes \cC_2 \subseteq \F_q^{n_1\times n_2}$.
Given a~collection $\cC = (\cC_i)_{i\in [m]}$ of linear codes over~$\F_q$, we can define the codes 
\[
\cC^{(i)} := \F_q^{n_1} \otimes\dots\otimes \cC_i \otimes\dots\otimes \F_q^{n_m} = \{ c\in \F_q^{n_1\times\dots\times n_m} \mid \forall \ell\in \cL_i\colon c|_\ell\in  \cC_i \} .
\]
It is clear that $\cC_1\otimes \dots \otimes \cC_m = \cC^{(1)}\cap \dots \cap \cC^{(m)}$ and $\cC_1\boxplus \dots \boxplus \cC_m = \cC^{(1)}+ \dots + \cC^{(m)}$. Note that every code $\cC^{(i)}$ is the direct sum of $|\cL_i| = \frac{1}{n_i}\prod_{i\in [m]} n_i$ copies of the code $\cC_i$. 
For $x\in \F_q^{n_1\times\dots\times n_m}$ we denote by $|x|_i$ and $\norm{x}_i$, respectively, the number and the fraction of the lines $\ell\in\cL_i$ such that $a|_\ell \ne 0$. 
It is clear that $\norm{x}_i = \frac{1}{|\cL_i|}|x|_i$. By $|x|$ and $\norm{x}$ we denote, respectively, the \emph{Hamming weight} (i.e., the number of non-zero entries) and the \emph{normalized Hamming weight} (i.e., the fraction of non-zero entries) of $x$. 
We will also use the following notations: the normalized distance $\delta(x,y):=\norm{x-y}$, the normalized distance to code $\delta(x,\cC):=\min_{y\in\cC}\norm{x-y}$, and the normalized minimum distance $\delta(\cC):=\min_{x\in\cC}\norm{x}$ for a code $\cC\subseteq\F_q^n$.

\begin{definition}[Product-expansion \cite{PK2022robust}]
Given a~collection $\cC = (\cC_i)_{i\in [m]}$ of linear codes $\cC_i\subseteq \F_q^{n_i}$, we say that $\cC$ is \emph{$\rho$-product-expanding} if every codeword $c\in \cC_1\boxplus \dots \boxplus \cC_m$ can be represented as a~sum $c = \sum_{i\in[m]} a_i$, where $a_i\in \cC^{(i)}$ for all $i\in [m]$ and the following inequality holds:
\begin{equation}\label{eq:prod-exp}
\rho\sum_{i\in [m]} \norm{a_i}_i \le \norm{c} .    
\end{equation}
\end{definition}

We denote as $\rho(\cC)$ the maximal $\rho$ such that $\cC$ is $\rho$-product-expanding. In \cite[Appendix B]{PK2022robust} it was shown that $\rho(\cC)$, up to the constant factor $1/m$, is equal to the Cheeger constant of the chain complex naturally associated with the product code $\cC_1\otimes\cdots\otimes \cC_m$.

\subsection{Robust and agreement testability}
Let $X$ be some finite index set, which we will use to enumerate bits of the code. So, a code $\cC\subseteq \F_q^X$ is a set of functions $f:X\to\F_q$. If $I\subseteq X$, then $\cC|_I:=\{c|_I\mid c\in \cC\}$ is punctured code $\cC$ consisting of restrictions of codewords from the code $\cC$ to the index set $I$.

\begin{definition}
    A \emph{test for a code} $\cC\subseteq \F_q^X$ is a set $T\subseteq 2^{X}$ equipped with probability measure $\PP$ on it.
\end{definition}
In this paper, we will always use the following probability distribution:
\begin{equation}\label{eqn:prob}
\PP(I)=\frac{|I|}{\sum_{J\in T}|J|}\quad\mbox{for}\quad I\in T.
\end{equation}

The tester for the pair (code $\cC\subseteq \F_q^X$ and a test $T$) works as follows: for a given word $c\in \F_q^X$ we randomly choose a set $I\in T$ 
and accept $c$ if $c|_I\in \cC|_I$ and reject otherwise. Thus, if $c\in \cC$, then any tester accepts it with probability 1.

\begin{definition}[Test robustness]
    The test $T$ for a code $\cC\subseteq \F_q^X$ is \emph{$\alpha$-robust} if for all $c\in \F_q^X$ we have
    \[\EE_{I\in T}\delta(c|_I,\cC|_I)\ge\alpha \delta(c,\cC),\]
    where $\EE$ denotes expectation.    
\end{definition}
Let us define the maximal robustness:
\begin{align*}
  \rho_r(T,\cC)&:=\max\fbr{\alpha\mid \mbox{Test $T$ is $\alpha$-robust for the code }\cC}.
\end{align*}

Usually, when the code $\cC$ is defined by a set of local codes, the natural test contains supports of all these local codes. For example, product code $\cC_1\otimes \cC_2$ can be defined by local codes on axis-parallel lines of the set $X=[n_1]\times [n_2]$:
$$\cC_1\otimes \cC_2=\fbr{f\in\F_q^{[n_1]\times [n_2]}\Bigm| f(\cdot, j)\in\cC_1\mbox{ for }j\in [n_2],f(i,\cdot)\in\cC_2\mbox{ for }i\in [n_1]}.$$
Thus, the natural test for the code $\cC_1\otimes \cC_2$ is the set of all axis-parallel lines:
$$T=\cL_1\cup\cL_2 = \fbr{[n_1]\times \{j\}\mid j\in [n_2]}\cup \fbr{\{i\}\times [n_2]\mid i\in[n_1]},$$
and $\PP$ defined in \eqref{eqn:prob} corresponds to the following procedure: choose a random direction, then choose a random line along this direction.
This test is called the \emph{axis-parallel line test}. For product of $m\ge 3$ codes, there exist different natural tests, since we can consider axis-parallel subspaces of different dimensions from $1$ to $m-1$.
The following definition gives a straightforward generalization of the 2-flat test from \cite[Algorithm 12.2]{Lin-func-prop}.
\begin{definition}[Axis-parallel $k$-flat test]
    Let $X=[n_1]\times \cdots\times [n_m]$, $k\in [m-1]$. Then, the \emph{axis-parallel $k$-flat test} is defined as the set $T_m^k$ of all $k$-dimensional axis-parallel subspaces (\emph{$k$-flats}) in $X$:
    $$T_m^k(X)=\bigcup_{I\subseteq [m], |I|=k}\cL_I,\quad \cL_I=\Bigfbr{\Bigfbr{x+\sum_{i\in I}s_ie_i\Bigm| s_i\in [n_i]\mbox{ for }i\in I}\Bigm| x\in X, x_i=0\mbox{ for }i\in I}.$$
    We will omit the argument of $T_m^k$ where it is not important or is clear from context.
\end{definition}

Here we follow the terminology from \cite{Lin-func-prop}.
The test $T_2^1$ is the standard axis-parallel line test, $T_m^1$ is its multidimensional version, and $T_m^{m-1}$ is the \emph{axis-parallel hyperplane test}. 
In \cite[Theorem 12.5]{Lin-func-prop} it was shown that $\rho_r(T_m^2, \cC^{\otimes m})\ge \alpha(\delta(\cC),m)$ for $m\ge 3$ and some function\footnote{From the proof of \cite[Theorem 12.5]{Lin-func-prop} it follows that $\alpha(\epsilon, m)=\epsilon^{\frac12(m-2)(m+3)}24^{2-m}$.} $\alpha(\epsilon, m)>0$.
This result shows that the requirement of constant robustness of the test $T_m^2$ for a family of codes $\cC_i^{\otimes m}$ is equivalent to the requirement of linear minimum distance of codes in this family. 
So, the only test that gives a non-trivial requirement on the code $\cC$ is the axis-parallel line test $T_m^1$. 
The test $T_m^1$ can be considered as the composition of tests $T_m^2$ for $\cC^{\otimes m}$ and $T_2^1$ for $\cC^{\otimes 2}$. As it will be shown more formally in Lemma \ref{lemma:robust-combine}, 
\[\rho_r(T_m^1,\cC^{\otimes m})\ge \rho_r(T_m^2,\cC^{\otimes m})\rho_r(T_2^1,\cC^{\otimes 2}),\]
that is, the constant robustness of $T_m^1$ for $\cC^{\otimes m}$ is equivalent to the constant robustness of $T_2^1$ for $\cC^{\otimes 2}$.

The following definition of agreement testability for product of several codes is a straightforward generalization of agreement testability for product of 2 codes \cite[Definition 2.8]{Dinur:stoc2022}.
\begin{definition}[Agreement testability for product code]
    Let $\cC=(\cC_1,\ldots,\cC_m)$ be a collection of codes. Product code $\otimes \cC$ is $\alpha$-\emph{agreement testable} if for each $c_1\in\cC^{(1)},\ldots,c_m\in\cC^{(m)}$ there exists $c\in\otimes\cC$ such that
    \[
    \alpha \EE_{i\in[m]}\|c_i-c\|_i
    \le \EE_{i,j\in[m]}\|c_i-c_j\|,
    \]
    where the uniform distribution on $[m]$ is assumed.
    Let us define the maximal agreement testability:
    \begin{align*}
      \rho_a(\otimes\cC)&:=\max\fbr{\alpha\mid \mbox{product code $\otimes\cC$ is $\alpha$-agreement testable}}.
    \end{align*}
\end{definition}

Note that $\rho_a(\otimes\cC)\le 2$, since $\|c_i-c_j\|\le \|c_i-c\|+\|c_j-c\|\le \|c_i-c\|_i+\|c_j-c\|_j$.

\newcommand{\cI}{\mathcal{I}}
\begin{restatable}[Robust testability + Linear distance = Agreement testability]{lemma}{LemRobustAgreement}\label{lemma:robust=agreement}
    Let $\cC=(\cC_1,\ldots,\cC_m)$ be a collection of codes $\cC_i\in\F_q^{n_i}$, $\rho_r:=\rho_r(T_m^1,\otimes\cC)$, $\rho_a:=\rho_a(\otimes\cC)$. Then
    \[
    \rho_r\ge \frac14\rho_a,\qquad \rho_a\ge \frac{\rho_r}{\rho_r+1}\min_{i\in[m]}\delta(\cC_i).
    \]
\end{restatable}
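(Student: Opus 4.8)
The plan is to first rewrite robustness of $T_m^1$ as a clean analytic inequality and then establish the two bounds by triangle inequalities, the second one needing one extra averaging trick.

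\textbf{Step 0 (normalizing robustness).} First I would record that $(\otimes\cC)|_\ell=\cC_i$ for every $\ell\in\cL_i$, and that under the measure \eqref{eqn:prob} the test $T_m^1$ amounts to picking a uniform direction $i\in[m]$ and then a uniform line of $\cL_i$; averaging the relative distance $\delta(c|_\ell,\cC_i)$ over $\ell\in\cL_i$ recovers $\delta(c,\cC^{(i)})$, since $\cC^{(i)}$ is a direct sum of copies of $\cC_i$ over $\cL_i$. Hence $\rho_r$ is the largest constant with $\EE_{i\in[m]}\delta(c,\cC^{(i)})\ge\rho_r\,\delta(c,\otimes\cC)$ for all $c$. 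I also record two elementary facts: (a) $\norm{x}\le\norm{x}_i$ for every $x$, because the nonzero entries of $x$ lie on its $|x|_i$ nonzero direction-$i$ lines, each of length $n_i$; and (b) if $x\in\cC^{(i)}$ then $\norm{x}\ge\delta(\cC_i)\,\norm{x}_i$, since on a nonzero line of $\cL_i$ the restriction of $x$ is a nonzero word of $\cC_i$, hence of relative weight $\ge\delta(\cC_i)$.

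\textbf{Step 1 ($\rho_r\ge\tfrac14\rho_a$).} Fix $c\in\F_q^X$ and let $c_i\in\cC^{(i)}$ be a nearest codeword (the nearest $\cC_i$-word chosen independently on each line of $\cL_i$), so $\norm{c-c_i}=\delta(c,\cC^{(i)})$. Applying $\rho_a$-agreement testability to the tuple $(c_1,\dots,c_m)$ gives $\hat c\in\otimes\cC$ with $\rho_a\,\EE_i\norm{c_i-\hat c}_i\le\EE_{i,j}\norm{c_i-c_j}$. Bounding $\norm{c_i-c_j}\le\norm{c-c_i}+\norm{c-c_j}$ makes the right side $\le 2\,\EE_i\delta(c,\cC^{(i)})$, while (a) makes the left side $\ge\rho_a\,\EE_i\norm{c_i-\hat c}$; combining with $\delta(c,\otimes\cC)\le\norm{c-\hat c}\le\norm{c-c_i}+\norm{c_i-\hat c}$ averaged over $i$ yields $\delta(c,\otimes\cC)\le(1+2/\rho_a)\,\EE_i\delta(c,\cC^{(i)})$, i.e. $\EE_i\delta(c,\cC^{(i)})\ge\tfrac{\rho_a}{\rho_a+2}\,\delta(c,\otimes\cC)\ge\tfrac{\rho_a}{4}\,\delta(c,\otimes\cC)$, using $\rho_a\le 2$.

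\textbf{Step 2 ($\rho_a\ge\tfrac{\rho_r}{\rho_r+1}\min_i\delta(\cC_i)$).} We may assume $\rho_r>0$, else the bound is trivial. Fix $c_i\in\cC^{(i)}$ for $i\in[m]$ and write $\delta_{\min}:=\min_i\delta(\cC_i)$. Plugging $c_j$ into the robustness inequality (its direction-$j$ term vanishes and $\delta(c_j,\cC^{(k)})\le\norm{c_j-c_k}$) gives $\rho_r\,\delta(c_j,\otimes\cC)\le\EE_k\norm{c_j-c_k}$, hence $\EE_j\delta(c_j,\otimes\cC)\le\tfrac1{\rho_r}\EE_{j,k}\norm{c_j-c_k}$. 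Choose $i_0\in[m]$ minimizing $\EE_i\norm{c_i-c_{i_0}}+\delta(c_{i_0},\otimes\cC)$; averaging this quantity over $i_0$ and using the last bound shows it is at most $\tfrac{\rho_r+1}{\rho_r}\EE_{i,j}\norm{c_i-c_j}$. Let $\hat c\in\otimes\cC$ realize $\delta(c_{i_0},\otimes\cC)$; then $\norm{c_i-\hat c}\le\norm{c_i-c_{i_0}}+\norm{c_{i_0}-\hat c}$ averaged over $i$ gives $\EE_i\norm{c_i-\hat c}\le\tfrac{\rho_r+1}{\rho_r}\EE_{i,j}\norm{c_i-c_j}$, and (b) applied to $c_i-\hat c\in\cC^{(i)}$ gives $\EE_i\norm{c_i-\hat c}_i\le\tfrac1{\delta_{\min}}\EE_i\norm{c_i-\hat c}\le\tfrac{\rho_r+1}{\rho_r\delta_{\min}}\EE_{i,j}\norm{c_i-c_j}$, which is exactly $\tfrac{\rho_r}{\rho_r+1}\delta_{\min}$-agreement testability.

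\textbf{Main obstacle.} Step 1 is routine. The delicate point in Step 2 is that robustness, applied to each $c_i$ separately, only produces a projection $\hat c_i$ of $c_i$ onto $\otimes\cC$, and these $\hat c_i$ need not coincide, whereas agreement testability demands a \emph{single} $\hat c$ simultaneously close to all the $c_i$. The averaging choice of $i_0$ — trading off the distance of $c_{i_0}$ from the other $c_i$ against its distance to $\otimes\cC$ — is the device that supplies such a common $\hat c$ with no $m$-dependent loss, and the final passage from Hamming weight $\norm{\cdot}$ back to line-weight $\norm{\cdot}_i$ via fact (b) is precisely where the minimum-distance factor $\min_i\delta(\cC_i)$ enters.
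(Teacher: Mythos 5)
Your proposal is correct and follows essentially the same route as the paper: Step 1 is identical (nearest line-codewords, agreement testability, triangle inequality, $\rho_a\le 2$), and Step 2 differs only cosmetically in that you pick $i_0$ by averaging the combined quantity $\EE_i\norm{c_i-c_{i_0}}+\delta(c_{i_0},\otimes\cC)$, whereas the paper picks $i_0=\argmin_i\EE_j\norm{c_i-c_j}$ and then applies robustness to that single $c_{i_0}$ — both yield the same constant $\frac{\rho_r}{\rho_r+1}\min_i\delta(\cC_i)$.
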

The proof is given in Appendix \ref{app:robust=agreement}. It is essentially the same as the proof for the product of two codes \cite[Lemma 2.9]{Dinur:stoc2022}.
From Lemma \ref{lemma:robust=agreement} we see that robust and agreement testability are essentially the same. 
Our main result is that product expansion of a collection of codes is different from robust and agreement testability of the product of these codes.
\begin{restatable}{theorem}{MainTheorem}\label{th:main}
    Let $C_t$ be the primitive Reed-Solomon $[n_t,\frac{n_t}{3}]$ code over the field $\F_{2^{2t}}$ defined by the check polynomial $(x-1)(x-\omega)\ldots (x-\omega^{\frac{n_t}{3}-1})$, where $t\in\NN$, $n_t=2^{2t}-1$, $\omega$ is a primitive element of $\F_q$. 
    For each $m\ge 3$ there exist $\alpha_r>0$ and $\alpha_a>0$ such that for all $t\in\NN$ the following inequalities hold:
    \begin{enumerate}
        \item $\rho(\underbrace{C_t,\ldots,C_t}_{m\ \mbox{\scriptsize\rm times}})\le \displaystyle\frac{1}{n_t}$;
        \item $\rho_r(T_m^k, C_t^{\otimes m})\ge \alpha_r$ for all $k\in[m-1]$;
        \item $\rho_a(C_t^{\otimes m})\ge \alpha_a.$
    \end{enumerate}
\end{restatable}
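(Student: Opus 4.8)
The plan is to treat the three claims separately; (2) and (3) follow quickly from what is already in the paper, and essentially all of the work is in (1). Write $\cC=(\underbrace{C_t,\dots,C_t}_{m})$, $\delta_t:=\delta(C_t)=1-\tfrac13+\tfrac1{n_t}>\tfrac23$, and $r:=n_t/3$, and recall that $\cC^{(\ell)}$ is $C_t$ in the $\ell$-th factor and $\F_q^{n_t}$ in all others.

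For (2): when $k\ge 2$ the bound is just the fact that robustness of the axis-parallel $k$-flat test follows from linear distance. Since $C_t$ is Reed--Solomon, $\delta_t>2/3$ is bounded away from $0$ uniformly in $t$, so \cite[Theorem 12.5]{Lin-func-prop} gives $\rho_r(T_m^2,C_t^{\otimes m})\ge\alpha(\delta_t,m)\ge\alpha(2/3,m)>0$, and the case $k\ge 3$ reduces to $k=2$ by the same test-composition idea as in Lemma~\ref{lemma:robust-combine} (test a $k$-flat by testing its $2$-flats). When $k=1$, Lemma~\ref{lemma:robust-combine} gives $\rho_r(T_m^1,C_t^{\otimes m})\ge\rho_r(T_m^2,C_t^{\otimes m})\cdot\rho_r(T_2^1,C_t^{\otimes 2})$, so it remains to bound $\rho_r(T_2^1,C_t^{\otimes 2})$ away from $0$ uniformly in $t$; by the two-code equivalence (\cite[Lemma 1]{PK2022robust}, \cite[Lemma 2.9]{Dinur:stoc2022}) this is the statement that the pair $(C_t,C_t)$ is $\Omega(1)$-product-expanding, i.e.\ that the bivariate low-degree test is robust, which is classical for Reed--Solomon codes of bounded rate (and is a two-code fact, not touched by the phenomenon of this paper). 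One then takes $\alpha_r:=\alpha(2/3,m)\cdot\inf_t\rho_r(T_2^1,C_t^{\otimes 2})>0$. Claim (3) follows at once from Lemma~\ref{lemma:robust=agreement}: $\rho_a(C_t^{\otimes m})\ge\frac{\rho_r}{\rho_r+1}\delta_t\ge\frac{\alpha_r}{\alpha_r+1}\cdot\tfrac23=:\alpha_a>0$.

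For (1) I would pass to the discrete-Fourier/polynomial picture. As $C_t$ is cyclic with check polynomial $\prod_{i<r}(x-\omega^i)$, a word lies in $C_t$ iff its DFT is supported on $[0,r)$, i.e.\ iff it is the evaluation table of a polynomial of degree $<r$. Hence $\cC^{(\ell)}=\{c:\widehat c\text{ supported on }\{j_\ell<r\}\}$, and $\cC^{(1)}+\dots+\cC^{(m)}=C_t\boxplus\dots\boxplus C_t$ consists exactly of those $c$ whose $m$-dimensional DFT avoids the central box $B:=[r,n_t)^m$ --- equivalently, the evaluation tables of polynomials no monomial of which has all $m$ exponents $\ge r$. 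In this language, for any decomposition $c=\sum_\ell a_\ell$ with $a_\ell\in\cC^{(\ell)}$, on the region $M_\ell:=\{j_\ell<r,\ j_i\ge r\ \forall i\ne\ell\}$ (the frequencies only $\cC^{(\ell)}$ can supply) one must have $\widehat{a_\ell}|_{M_\ell}=\widehat c|_{M_\ell}$; and since every nonzero axis-$\ell$ line of $a_\ell$ is a nonzero codeword of $C_t$, the normalized line count $\|a_\ell\|_\ell$ is bounded below, for \emph{every} completion of $\widehat{a_\ell}$ off $M_\ell$, by the distance from the $(m-1)$-fold box sum $C_t\boxplus\dots\boxplus C_t$ of the $(m-1)$-variate ``slices'' obtained by fixing $j_\ell$ and inverting the transform in the remaining coordinates on $[r,n_t)^{m-1}$. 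It then suffices to produce one word $c$ with $\min_{\mathrm{decomp}}\sum_\ell\|a_\ell\|_\ell\ge n_t\|c\|$. I would take $c$ to be the evaluation table of an explicit \emph{sparse} polynomial (nonzero on $O(n_t^{m-1})$ of the $n_t^m$ grid points), with its coefficients on the monomials bordering $B$ built from suitable low-weight words of the dual code $C_t^\perp$ --- which, because $C_t$ has rate exactly $1/3$, has rate $2/3$ and distance $\asymp n_t/3$, so its box sums keep $\Omega(1)$ covering radius --- arranged so that, because $m\ge 3$, the forced data $\widehat c|_{M_1},\widehat c|_{M_2},\widehat c|_{M_3}$ cannot all be completed to cheap lower-dimensional words. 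For $m>3$ one tensors the $m=3$ witness with a fixed codeword of $C_t^{\otimes(m-3)}$ (with DFT support bounded away from high frequencies) and checks the bound survives.

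The only genuinely hard step is this rigidity estimate --- certifying that the chosen sparse $c$ has \emph{no} cheap decomposition. The difficulty is that sparse words of $C_t\boxplus\dots\boxplus C_t$ have a strong tendency to split cheaply: every ``obvious'' candidate (unions of axis-parallel subspaces, tensors of low-weight words, words supported on a single affine curve, single-frequency words) is either too heavy or decomposes by routing each piece through a single $\cC^{(\ell)}$, so one must exhibit a sparse $c$ that is genuinely ``entangled'' across the $m$ directions. The proof that $\sum_\ell\|a_\ell\|_\ell\ge n_t\|c\|$ for \emph{every} decomposition must then use the exact rate $1/3$ (so that $C_t^\perp$ and its $(m-1)$-fold box sums have $\Omega(1)$ dual distance / covering radius) together with the combinatorics of the three border regions $M_1,M_2,M_3$ of $B$ --- something that simply does not occur for $m=2$, where $(C_t,C_t)$ is $\Omega(1)$-product-expanding by the result used in (2). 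Once this coding-theoretic lemma is in hand, the weight bookkeeping is routine and yields $\rho(C_t,\dots,C_t)=\min_{c\ne 0}\frac{\|c\|}{\min_{\mathrm{decomp}}\sum_\ell\|a_\ell\|_\ell}\le\frac1{n_t}$.
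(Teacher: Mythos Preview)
Your treatment of claims (2) and (3) is essentially the paper's: robustness of $T_m^k$ for $k\ge 2$ follows from linear distance alone (via \cite[Theorem~12.5]{Lin-func-prop} or, as the paper does, iterated hyperplane robustness \cite{Chiesa2020:hyperplane-robust}), the case $k=1$ then needs only the bivariate Reed--Solomon test (Polishchuk--Spielman, used in the paper as Lemma~\ref{th:RS-robust} and Corollary~\ref{col:RS-robust}), and (3) comes from Lemma~\ref{lemma:robust=agreement}. One small remark: the reduction from $T_m^k$ to $T_m^2$ for $k\ge 3$ is not Lemma~\ref{lemma:robust-combine} itself (that lemma composes in the other direction), but the simpler inequality $\rho_r(T_m^k,\cC^{\otimes m})\ge\rho_r(T_m^2,\cC^{\otimes m})$, which holds because a $k$-flat codeword restricts to $2$-flat codewords and hence $\delta(x|_\pi,\cC^{\otimes k})\ge \EE_{\sigma\subset\pi}\delta(x|_\sigma,\cC^{\otimes 2})$.

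The real gap is claim (1). You correctly set up the DFT picture and correctly identify that the task is to exhibit a sparse $c\in C\boxplus C\boxplus C$ with no cheap decomposition, but you do not have a candidate, and the machinery you propose (covering radii of dual box sums, ``rigidity'' of the forced Fourier data on the border regions $M_\ell$) is both vague and much heavier than what is needed. The paper's witness is completely explicit and the lower bound is a one-line combinatorial count, with no coding-theoretic estimate at all. In your Fourier coordinates, take $a'$ supported on the discrete hyperplane $\{(i,j,l):i+j+l\equiv 0\bmod n\}$ with all nonzero entries equal to $1$, and set $a(x,y,z)=a'(x,\omega^{-r}y,\omega^{-2r}z)$. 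The rescaling turns the check condition $p(x)p(y)p(z)\,a\equiv 0$ into $p(x)r(y)s(z)\,a'\equiv 0$, where $p,r,s$ are the three consecutive degree-$r$ blocks of roots of $x^n-1$; since $r=n/3$ these blocks multiply to $x^n-1$, and the hyperplane support of $a'$ then forces the product to vanish. The decomposition bound needs nothing about $C^{\perp}$: every axis-parallel line meets $\supp a$ in \emph{exactly one} point, so for any $a=a_1+a_2+a_3$ with $a_i\in\cC^{(i)}$, each of the $n^2$ nonzero positions of $a$ forces some $a_i$ to be nonzero on the (unique) direction-$i$ line through it, and distinct nonzero positions lie on distinct direction-$i$ lines. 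Hence $|a_1|_1+|a_2|_2+|a_3|_3\ge n^2$, while $\|a\|=1/n$, giving $\rho(C,C,C)\le 1/n$. For $m>3$ the paper does not tensor the witness up; it invokes the monotonicity $\rho(\underbrace{C,\dots,C}_{m})\le\rho(C,C,C)$ from \cite[Lemma~11]{PK2022robust}.
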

Moreover, product expansion implies robustness of the test $T_m^1$ for $\cCpow{m}$.
\begin{restatable}{proposition}{MainProp}\label{prop:main}
    Let $\cC\subsetneq\F_q^n$ and $m\ge 2$. Then there exists a function $\alpha$ such that $\alpha(x)>0$ for $x>0$ and
    \[
    \rho_r(T_m^1,\cCpow{m})\ge \alpha(\rho(\underbrace{\cC,\ldots,\cC}_{m\ \mbox{\scriptsize\rm times}})).
    \]
\end{restatable}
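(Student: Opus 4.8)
The plan is to show that $\rho$-product-expansion of the collection $(\cC,\dots,\cC)$ forces the axis-parallel line test $T_m^1$ to be robust, by comparing the two quantities on an arbitrary word $f\in\F_q^{n_1\times\dots\times n_m}$ (here all $n_i=n$). First I would fix $f$ and let $c\in\cCpow{m}$ be a closest codeword, so $\delta(f,\cCpow{m})=\norm{f-c}$; replacing $f$ by $f-c$ we may assume the nearest codeword is $0$, i.e. $\delta(f,\cCpow m)=\norm f$. Next I would relate the line-test defect $\EE_{I\in T_m^1}\delta(f|_I,\cC|_I)$ to an element of $\cC^{(1)}\boxplus\dots$: for each axis-parallel line $\ell$ in direction $i$, the projection of $f|_\ell$ onto $\cC$ (or a closest codeword of $\cC$) defines a correction; assembling these corrections over all lines in direction $i$ gives a word $g_i$, and the point is that $f-g_i$ vanishes on the appropriate lines while $\norm{g_i}_i\le \norm{f}$ essentially by definition of robustness-defect up to the distance-normalization in~\eqref{eqn:prob}. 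The word $h:=\sum_i g_i$ — or rather $f$ modulo the span of the $\cC^{(i)}$'s — should lie in $\cC^{(1)}\boxplus\dots\boxplus\cC^{(m)}$ once the per-direction corrections have made all restrictions locally correct, so product-expansion applies to it.

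The key steps, in order: (1) Reduce to nearest codeword $0$ as above. (2) For each direction $i$, define the local-correction word and bound $\sum_i \norm{g_i}_i$ (summing over directions) by a constant times the line-test defect $D(f):=\EE_{I\in T_m^1}\delta(f|_I,\cC|_I)$; this is where the probability weighting~\eqref{eqn:prob} is used to convert per-line distances into the normalized quantities $\norm{\cdot}_i$. (3) Show $f':=f-\sum_i g_i'$ for a suitable choice of the $g_i'\in\cC^{(i)}$ either lies in $\cCpow m$ (in which case $\delta(f,\cCpow m)\le \sum_i\norm{g_i'}$ directly) or else that $\delta(f,\cCpow m)$ is already comparable to $D(f)$. (4) In the first case, since the $g_i'$ are obtained from local corrections, $\sum_i\norm{g_i'}\le \sum_i\norm{g_i'}_i$; combining with step (2) gives $\delta(f,\cCpow m)\le c_1 D(f)$ for an absolute constant $c_1$. (5) The subtle case is when the local corrections do not immediately land in $\cCpow m$ but only in $\cC^{(1)}\boxplus\dots\boxplus\cC^{(m)}$; there one invokes $\rho$-product-expansion to rewrite that word as $\sum_i a_i$ with $a_i\in\cC^{(i)}$ and $\rho\sum_i\norm{a_i}_i\le\norm{\text{(that word)}}\le c_2 D(f)$, then uses $f - \sum_i(g_i'-a_i)$ — or an analogous bookkeeping — to conclude $\delta(f,\cCpow m)\le \frac{c_3}{\rho}D(f)$. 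Finally, unwinding, $D(f)\ge \alpha(\rho)\,\delta(f,\cCpow m)$ with $\alpha(\rho)=\min\{c_1^{-1},\rho/c_3\}$, which is positive whenever $\rho>0$; this gives the stated $\rho_r(T_m^1,\cCpow m)\ge\alpha(\rho(\cC,\dots,\cC))$.

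The main obstacle I expect is step (3)/(5): after correcting $f$ along every line in every direction independently, the corrected restrictions in one direction need not be consistent with the corrections made for another direction, so one does not get a single well-defined codeword. The clean way around this is to first correct only along direction-$1$ lines, obtaining $f_1$ with $f_1|_\ell\in\cC$ for all $\ell\in\cL_1$, hence $f_1\in\cC^{(1)}$, and iterate — but each stage perturbs earlier stages by at most the newly-corrected weight, so one must track an error that could a priori compound across the $m$ directions. Controlling this compounding (showing the total perturbation is $O(D(f))$ with a constant depending only on $m$, not on $n$ or $\delta(\cC)$) is the delicate point; here the fact that $\cCpow{m-1}$ already has good robustness for the $2$-flat test (via \cite[Theorem 12.5]{Lin-func-prop}, or equivalently the combination Lemma~\ref{lemma:robust-combine}) lets the compounding be absorbed, and it is essentially this interplay — product-expansion supplying the decomposition into the $\cC^{(i)}$ and high-dimensional robustness of lower-dimensional products controlling the cross-direction errors — that makes the function $\alpha$ depend only on $\rho$ (and $m$).
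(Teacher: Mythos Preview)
Your approach is quite different from the paper's and, as written, has a genuine gap. The paper's proof is short and indirect: it never uses the full $m$-fold product expansion. Instead it invokes monotonicity of product expansion (\cite[Lemma~11]{PK2022robust}) to get $\rho(\cC,\cC)\ge\rho$ and $\delta(\cC)=\rho(\cC)\ge\rho$; then the known two-dimensional chain $\rho_a(\cCpow 2)\ge\rho(\cC,\cC)$ (\cite[Lemma~1]{PK2022robust}) and $\rho_r(T_2^1,\cCpow 2)\ge\tfrac14\rho_a(\cCpow 2)$ (\cite[Lemma~2.9]{Dinur:stoc2022}); and finally Lemma~\ref{lemma:robust-Tm1} (hyperplane robustness plus composition) to climb from $\rho_r(T_2^1,\cCpow 2)$ and $\delta(\cC)$ up to $\rho_r(T_m^1,\cCpow m)$. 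This yields the explicit $\alpha(\rho)=\tfrac{1}{4\cdot 12^{m-2}}\rho^{\frac12(m-2)(m+3)+1}$ with no bookkeeping on an individual word $f$ at all.

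The gap in your direct argument is step~(5). Product expansion only says that a word $w\in\cC^{(1)}\boxplus\cdots\boxplus\cC^{(m)}$ has a cheap decomposition $w=\sum_i a_i$; it does not by itself manufacture a codeword of $\cCpow m=\bigcap_i\cC^{(i)}$. For $m=2$ the trick is that $c_1-c_2=a_1+a_2$ forces $c_1-a_1=c_2+a_2$ to lie in $\cC^{(1)}\cap\cC^{(2)}$, and this is exactly the content of \cite[Lemma~1]{PK2022robust}. For $m\ge 3$ the analogue fails: if $c_1-c_2=a_1+a_2+a_3$ then $c_1-a_1=c_2+a_2+a_3$ is in $\cC^{(1)}$ but there is no reason for it to lie in $\cC^{(2)}$ or $\cC^{(3)}$. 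Your ``bookkeeping'' $f-\sum_i(g_i'-a_i)$ has the same problem: subtracting elements of the various $\cC^{(i)}$ from an arbitrary $f$ never lands you in $\cCpow m$ unless the pieces match up exactly, and you have not exhibited any identity that forces this. Your final paragraph acknowledges the difficulty but only gestures at an ``interplay'' with robustness of lower-dimensional products; once you actually invoke robustness of $\cCpow{k}$ for $k<m$ to control the cross-direction errors you are back to the paper's bootstrapping route, and the $m$-fold product expansion is doing no work beyond implying the $2$-fold one. (There is also a smaller slip in step~(2): if $g_i$ denotes the nearest word of $\cC^{(i)}$ to $f$, then $\norm{g_i}_i\le\norm{f}$ is false in general; what is small is $\norm{f-g_i}$, and that is what feeds into $D(f)$.)
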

This proposition together with Theorem \ref{th:main} shows that the product expansion property imposes a stronger constraint on the code $\cC$ than robust testability.
Theorem \ref{th:main} and the Proposition \ref{prop:main} are proved in the next section.

\section{The proofs}
Let us fix $t\in \NN$ and consider the primitive Reed-Solomon $[n,k]$ code $C$ over the field $\F_q$, where $q=2^{2t}$, $n=q-1$, and $k=n/3$. This code can be defined by the check polynomial $p(x)=(x-1)(x-\omega)\ldots (x-\omega^{k-1})$, where $\omega$ is a primitive element of $\F_q$:
\[
C = \Bigfbr{(a_i)_{i=0}^{n-1}\in \F_q^n\Bigm| p(x)\sum_{i=0}^{n-1}a_ix^i\equiv 0\mod (x^n-1)}.
\]
First, we will show that $\rho(C,C,C)\le 1/n$.

First, let us describe the dual of the product of cyclic codes in terms of check polynomials. Consider cyclic codes $\cC_1,\ldots,\cC_m\in \F_q^n$ defined, respectively, by check polynomials $p_1,\ldots,p_m\in \F_q[x]$ such that $p_i|(x^n-1)$:
\begin{align*}
    \cC_i&=\Bigfbr{(a_i)_{i=0}^{n-1}\in \F_q^n\Bigm| p_i(x)\sum_{i=0}^{n-1}a_ix^i\equiv 0\mod (x^n-1)}\\
    &\cong \Bigfbr{a\in \F_q[x]\Bigm| \deg a<n,\ p_i(x)a(x)\equiv 0\mod (x^n-1)}.
\end{align*}
Here for codes $\cC_1\subseteq V_1$, $\cC_2\subseteq V_2$ we say that $\cC_1\cong\cC_2$ if 
there is a linear isomorphism $\phi:V_1\to V_2$ preserving the Hamming distance\footnote{Distinguished bases in $V_1,V_2$ are necessary to define the Hamming distance and the minimum distance of $\cC_1$, $\cC_2$. In the space of polynomials of degree at most $k$ the distinguished basis is $\{1,x,\ldots,x^k\}$.} such that $\phi(\cC_1)=\cC_2$. 

\begin{lemma}\label{lemma:sum-check}
    Let $\cC=\cC_1\boxplus \cdots\boxplus \cC_m$. Consider the ideal $\cI=(x_1^n-1,\ldots,x_m^n-1)\subseteq \F_q[x_1,...,x_m]$. Then
    \begin{equation*}
        \cC = \Bigfbr{a\in \F_q[x_1,...,x_m] \Bigm| \deg_{x_i}a<n \mbox{ \rm and } a(x_1,...,x_m)\prod_{i=1}^m p_i(x_i)\equiv 0 \mod \cI}.
    \end{equation*}
\end{lemma}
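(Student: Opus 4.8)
The plan is to reduce the $m$-fold statement to the familiar one-dimensional fact that a cyclic code with check polynomial $p_i$ is, under the polynomial identification, the set of $a$ with $\deg a < n$ and $p_i(x)a(x)\equiv 0 \pmod{x^n-1}$, and then dualize and combine. First I would recall that $\cC = \cC_1\boxplus\cdots\boxplus\cC_m = (\cC_1^\bot\otimes\cdots\otimes\cC_m^\bot)^\bot$ is by definition the dual of the tensor product of the duals (this is the natural $m$-fold generalization of the $\boxplus$ notation introduced in the excerpt, matching $\cC^{(1)}+\cdots+\cC^{(m)}$). So it suffices to understand $\cC^\bot = \cC_1^\bot\otimes\cdots\otimes\cC_m^\bot$ and then take the orthogonal complement inside $\F_q^{n\times\cdots\times n}$.

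Next I would identify $\F_q^{n\times\cdots\times n}$ with the quotient ring $R := \F_q[x_1,\ldots,x_m]/\cI$, sending the tensor $e_{j_1}\otimes\cdots\otimes e_{j_m}$ to the monomial $x_1^{j_1}\cdots x_m^{j_m}$; every residue class has a unique representative with $\deg_{x_i} < n$ for all $i$, which explains the degree constraints in the statement. Under this identification the tensor code $\cC_1^\bot\otimes\cdots\otimes\cC_m^\bot$ is the ideal of $R$ generated by the product $\prod_i g_i(x_i)$, where $g_i = (x^n-1)/p_i$ is the generator polynomial of $\cC_i^\bot$ — i.e. $\cC^\bot$ consists of all $b(x_1,\ldots,x_m)$ with $\prod_i g_i(x_i) \mid b$ in $R$. (I would prove this by the standard dimension/generator argument: $\cC_i^\bot$ as an ideal in $\F_q[x]/(x^n-1)$ is $(g_i)$, and the tensor product of principal ideals $(g_1)\otimes\cdots\otimes(g_m)$ in the tensor product of the quotient rings is exactly $(\prod g_i(x_i))$ in $R$, with $\F_q$-dimension $\prod_i (n - \deg g_i) = \prod_i \dim\cC_i^\bot$, which is the dimension of the tensor code.)

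Finally I would compute the dual: $a \in \cC = (\cC^\bot)^\bot$ iff the coefficient-wise inner product of $a$ with every multiple of $\prod_i g_i(x_i)$ in $R$ vanishes. The clean way to phrase this is via the standard Mattson–Solomon / reciprocal-polynomial duality for cyclic codes, or more directly: writing $a(x_1,\ldots,x_m)\prod_i p_i(x_i)$ and reducing mod $\cI$, one shows this product is $\equiv 0 \pmod{\cI}$ exactly when $a$ is orthogonal to every element of the ideal $(\prod_i g_i(x_i))$, because $p_i g_i = x^n - 1 \equiv 0$ and the pairing $\langle a, b\rangle$ on $R$ corresponds (up to reflecting indices) to reading off a fixed coefficient of $a(x)b(x^{-1})$ in $R$. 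So $a \in \cC$ iff $\deg_{x_i} a < n$ and $a\cdot\prod_i p_i(x_i) \equiv 0 \pmod{\cI}$, which is the claim. The main obstacle is getting the duality bookkeeping exactly right — in particular keeping track of the reflection $x_i \mapsto x_i^{-1}$ that relates "annihilator under the ring-theoretic product" to "orthogonal complement under the Hamming/coefficient inner product," and checking that the condition $a\prod_i p_i(x_i)\equiv 0 \pmod \cI$ is symmetric enough that the reflection does not actually change the stated set (it does not, since $x^n - 1$ is fixed by the reflection up to a unit, so $p_i$ and its reciprocal cut out the same cyclic code). Everything else is routine linear algebra over $\F_q$ plus the elementary theory of cyclic codes.
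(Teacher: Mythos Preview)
Your overall strategy matches the paper's exactly: identify $\F_q^{n\times\cdots\times n}$ with $R=\F_q[x_1,\ldots,x_m]/\cI$, realize $\cC^\bot=\cC_1^\bot\otimes\cdots\otimes\cC_m^\bot$ as a principal ideal in $R$, and then dualize. The paper carries this out by introducing the reflection $p^*(x_1,\ldots,x_m):=p(x_1^{n-1},\ldots,x_m^{n-1})\bmod\cI$, observing that $p_i^*$ generates $\cC_i^\bot$, hence $\prod_i p_i^*(x_i)$ generates $\cC^\bot$, and therefore $\bigl(\prod_i p_i^*(x_i)\bigr)^*=\prod_i p_i(x_i)$ is a check polynomial for $\cC$.

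However, your execution has a genuine bug. You write that $g_i=(x^n-1)/p_i$ is the generator polynomial of $\cC_i^\bot$; in fact this is the generator polynomial of $\cC_i$ itself (by the definition of check polynomial, $p_i\cdot g_i = x^n-1$ with $g_i$ generating $\cC_i$). The generator of $\cC_i^\bot$ is the \emph{reciprocal} of $p_i$, i.e.\ up to a unit $p_i^*(x)=p_i(x^{-1})$ in $\F_q[x]/(x^n-1)$. Your attempt to absorb the discrepancy into the reflection $x_i\mapsto x_i^{-1}$ does not work either: the assertion that ``$p_i$ and its reciprocal cut out the same cyclic code'' is false in general and false in particular for the Reed--Solomon check polynomial $p(x)=\prod_{j=0}^{k-1}(x-\omega^j)$ used later, whose root set $\{1,\omega,\ldots,\omega^{k-1}\}$ is not closed under inversion. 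If you replace your $g_i$ by $p_i^*$ throughout, the orthogonal-complement computation you sketch (via the pairing $\langle a,b\rangle=$ constant coefficient of $a(\mathbf{x})b(\mathbf{x}^{-1})$) goes through cleanly and lands on the stated condition $a\cdot\prod_i p_i(x_i)\equiv 0\pmod{\cI}$ --- which is exactly the paper's argument.
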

\begin{proof}
    For a polynomial $p(x_1,...,x_k)$ define $p^*(x_1,...,x_k):=p(x_1^{n-1},...,x_k^{n-1})\mod \cI$.
    
    Since $p_i(x)$ is a check polynomial for $\cC_i$, then $p_i^*(x)$ is a generator polynomial for $\cC_i^\bot$, i.e.
    \[
    \cC_i^\bot=\bigfbr{p^*_i(x)q(x)\bigm| \deg q<n-\deg p_i}=\bigfbr{a\in \F_q[x]\bigm| \deg a<n\mbox{ and }p_i^*|a}.
    \]
    Hence, the tensor product of $\cC_1^\bot,\ldots,\cC_m^\bot$ is generated by $p^*_1(x_1)\cdots p^*_m(x_m)\in \F_q[x_1,\ldots,x_m]$:
    \[
    \cC_1^\bot\otimes\cdots\otimes \cC_m^\bot=\bigfbr{a\in \F_q[x_1,...,x_m]\bigm| \deg_{x_i} a<n\mbox{ and }p_i^*(x_i)|a}.
    \]
    Therefore, $(p^*_1(x_1)\cdots p^*_m(x_m))^*=p_1(x_1)\cdots p_m(x_m)$ is a check polynomial for $(\cC_1^\bot\otimes\cdots\otimes \cC_m^\bot)^\bot=\cC_1\boxplus \cdots\boxplus \cC_m$.
\end{proof}

\begin{lemma}\label{lemma:not-expanding}
    Let $C$ be the primitive Reed-Solomon $[n,k]$ code $C$ over the field $\F_q$ defined by the check polynomial $p(x)=(x-1)(x-\omega)\ldots (x-\omega^{k-1})$, where $q=2^{2t}$, $n=q-1$, $k=n/3$. Then
    $$\rho(C,C,C) \le 1/n.$$
\end{lemma}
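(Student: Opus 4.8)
The plan is to exhibit an explicit codeword $c\in C\boxplus C\boxplus C$ whose weight is small relative to any decomposition $c = a_1+a_2+a_3$ with $a_i\in C^{(i)}$. By Lemma~\ref{lemma:sum-check}, membership in $C\boxplus C\boxplus C$ translates into a divisibility-style condition: identifying $\F_q^{n\times n\times n}$ with polynomials $a(x_1,x_2,x_3)$ of degree $<n$ in each variable, we have $c\in C\boxplus C\boxplus C$ iff $a(x_1,x_2,x_3)\,p(x_1)p(x_2)p(x_3)\equiv 0\pmod{\cI}$, where $\cI=(x_1^n-1,x_2^n-1,x_3^n-1)$. The first step is therefore to translate this into the frequency (exponent) domain: after the substitution $x_i\mapsto\omega^{j_i}$, a polynomial of degree $<n$ corresponds to a function on $\ZZ_n^3$, the ideal $\cI$ disappears, and $p(\omega^j)=\prod_{\ell=0}^{k-1}(\omega^j-\omega^\ell)$ vanishes exactly when $j\in\{0,1,\dots,k-1\}$. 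So $C^{(i)}$ in the frequency domain is the set of functions supported on coordinates where the $i$-th index lies in $\{0,\dots,k-1\}$ (a ``slab''), and $C\boxplus C\boxplus C$ is the set of functions supported on the union of the three slabs.

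The second step is to pick the right $c$. The natural choice is a function supported on the intersection of the slab structure in a way that forces any decomposition to be heavy, while $c$ itself is light: take $c$ to be supported on a single point in the frequency domain that lies in all three slabs, e.g. the point $(0,0,0)$, or more robustly on a small set chosen so that the back-and-forth Fourier transform makes $\|c\|$ small. Concretely, I would take $\hat c$ to be the indicator of a single frequency; then in the original domain $c$ has full weight $\|c\|=1$, which is the wrong direction — so instead I would look for $c$ of small spatial weight whose spectrum is forced to live near $(0,0,0)$. The key quantitative mechanism: any $a_i\in C^{(i)}$ has Fourier support confined to its slab, and because $C$ is a Reed--Solomon code of rate $1/3$, the slab occupies a $1/3$ fraction; the distance $\delta(C)=1-k/n+1/n\approx 2/3$ means that a nonzero $a_i$ has spatial weight $\|a_i\|_i\geq 2/3$ along each relevant line, in particular $\|a_i\|_i$ is bounded below by a constant whenever $a_i\neq 0$. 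Thus if $c\neq 0$, at least one $a_i\neq 0$, giving $\sum_i\|a_i\|_i \gtrsim 1$, and to get $\rho\le 1/n$ I need a nonzero $c\in C\boxplus C\boxplus C$ with $\|c\|\le c_0/n$ for the appropriate constant — i.e. a codeword of Hamming weight $O(n^2)$ (a vanishing fraction of $n^3$).

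The third step, which I expect to be the main obstacle, is producing that low-weight codeword. A single axis-parallel line supported on a dual-of-RS codeword has weight $\sim n$ but does not lie in $C\boxplus C\boxplus C$; I need something genuinely three-dimensional. The idea is to use the $\boxplus$ description $C^{(1)}+C^{(2)}+C^{(3)}$ together with sparse generators: $C^\bot$ is an RS code, hence has codewords, but low-weight vectors in $C\boxplus C\boxplus C$ should come from exploiting that $p(x)$ divides $x^n-1$ with a structured cofactor. Here the specific choice $q=2^{2t}$, $n=q-1$, $k=n/3$ matters: since $3\mid n$ (as $3\mid q-1$ when $q$ is an even power of $2$), we have $x^n-1=(x^{n/3}-1)(x^{2n/3}+x^{n/3}+1)$, and the generator polynomial $p^*(x)$ of $C^\bot$ has degree $n-k=2n/3$; one can then arrange a product $p^*(x_1)q_1(x_1)\cdot(\text{similar in }x_2,x_3)$, or rather a single polynomial in the ideal generated by the $p^*(x_i)$'s, that is a monomial times a product of three binomials — giving a codeword with $O(1)$ nonzero frequencies, equivalently a near-maximal-weight spatial codeword. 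Since that is backwards, the correct route is the dual: I want $c$ with FEW spatial nonzeros, i.e. MANY spectral nonzeros, so $\hat c$ should be the complement of a sparse set. Concretely, take $\hat c$ to be the indicator of $\ZZ_n^3$ minus a single point $(j_1,j_2,j_3)$ with all $j_i\notin\{0,\dots,k-1\}$; then $\hat c$ is supported on the union of the three slabs plus the all-but-one set outside, which is \emph{not} inside the union — so this fails too, confirming the difficulty. The resolution is that one must check the union of slabs contains a set whose spectral complement is a single point or $O(1)$ points, which happens precisely because $3k=n$ forces the three slabs $\{x_i<k\}$ to cover all of $\ZZ_n^3$ except the ``diagonal-complement'' region; verifying that the complement of the union of the three slabs has size $(n-k)^3=(2n/3)^3$, and then finding within $C\boxplus C\boxplus C$ a codeword whose \emph{spatial} support is the inverse transform of an indicator of a set missing only $O(1)$ frequencies — this combinatorial-algebraic balancing of the Reed--Solomon evaluation map against the three-fold product structure is the crux, and I would devote the bulk of the argument to pinning down that explicit $c$ and bounding $\|c\|\le 1/n$ via a direct evaluation/counting estimate.
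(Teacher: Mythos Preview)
Your high-level plan is right --- one needs an explicit nonzero $c\in C\boxplus C\boxplus C$ with $\|c\|=1/n$ and $\sum_i\|a_i\|_i\ge 1$ for every decomposition --- but the proposal neither produces such a $c$ nor gives a valid lower-bound mechanism for $\sum_i\|a_i\|_i$. On the latter point, you claim that $\|a_i\|_i$ is bounded below by a constant whenever $a_i\neq 0$, invoking the minimum distance of $C$. This is false: $\|a_i\|_i$ is the \emph{fraction of lines} in direction $i$ on which $a_i$ is nonzero, and a nonzero $a_i\in C^{(i)}$ supported on a single line has $\|a_i\|_i=1/n^2$. The distance of $C$ bounds the weight \emph{along} a line, not the number of lines. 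So even granted a suitable $c$, your argument for the decomposition lower bound would not go through.

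The paper's construction is the one your spectral heuristics circle around without reaching: set $a'(x,y,z)=\sum_{i+j+l\equiv 0\ (\mathrm{mod}\ n)}x^iy^jz^l$ and $a(x,y,z)=a'(x,\omega^{-k}y,\omega^{-2k}z)$. The shift is where $k=n/3$ enters: it replaces the check product $p(x)p(y)p(z)$ by $p(x)r(y)s(z)$ with $r(x)=p(\omega^k x)$, $s(x)=p(\omega^{2k}x)$, and $p(x)r(x)s(x)$ equal to a constant times $x^n-1$; a short convolution computation then gives $a'\cdot p\cdot r\cdot s\equiv 0\pmod{\cI}$, hence $a\in C\boxplus C\boxplus C$ by Lemma~\ref{lemma:sum-check}. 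The diagonal support yields $|a|=n^2$, i.e.\ $\|a\|=1/n$. For the lower bound, the decisive combinatorial feature of $a$ is that every axis-parallel line meets its support in exactly one point; assigning to each nonzero position of $a$ a direction $s$ in which $a_s$ is nonzero there gives an injection into the set of pairs $(s,\ell)$ with $a_s|_\ell\neq 0$, forcing $\sum_s|a_s|_s\ge n^2$ and hence $\sum_s\|a_s\|_s\ge 1$. Both ingredients --- the shifted diagonal codeword exploiting $3k=n$, and the one-point-per-line counting --- are missing from your proposal.
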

\begin{proof}
    A codeword of the code $C\boxplus C\boxplus C$ can be defined as a polynomial 
    $f(x,y,z)$ such that 
    $$f(x,y,z)p(x)p(y)p(z)\equiv 0\mod (x^n-1,y^n-1,z^n-1).$$ 
    Consider the polynomials 
    $$a'(x,y,z)=\sum_{i=0}^{n-1}\sum_{j=0}^{n-1}\sum_{l=0}^{n-1}a'_{ij\ell}x^iy^jz^l,\quad\mbox{and}\quad a(x,y,z):=a'(x,\omega^{-k}y,\omega^{-2k}z),$$
    where 
    $$a'_{ijl}=\begin{cases}1,&i+j+l\equiv 0\mod n\\0,&\mbox{otherwise}.\end{cases}$$
    First, we will show that $a$ is a codeword of the code $C\boxplus C\boxplus C$. We need to show that 
    \begin{equation}\label{eqn:a-is-codeword}
     a(x,y,z)p(x)p(y)p(z)=0\mod (x^n-1,y^n-1,z^n-1).
    \end{equation}
    Consider the polynomials 
    $$r(x):=p(\omega^k x)=\omega^{k^2}\prod_{i=k}^{2k-1}(x-\omega^i),\qquad s(x):=p(\omega^{2k}x)=\omega^{2k^2}\prod_{i=2k}^{3k-1}(x-\omega^i).$$
    We have $a(x,y,z)p(x)p(y)p(z)=a'(x,\omega^{-k}y,\omega^{-2k}z)p(x)r(\omega^{-k}y)s(\omega^{-2k}z)$, $\omega^n=1$, hence by the replacement $y\mapsto \omega^{-k}y$, $z\mapsto\omega^{-2k}z$ the condition \eqref{eqn:a-is-codeword} can be rewritten as
    \begin{equation}\label{eqn:a'-is-codeword}
     a'(x,y,z)p(x)r(y)s(z)=0\mod (x^n-1,y^n-1,z^n-1).
    \end{equation}
    Since $\omega$ is a primitive element of $\F_q$, we have $p(x)r(x)s(x)=\omega^{3k^2}\prod_{i=0}^{n-1}(x-\omega^i)=x^{n}-1$. Let $p(x)=\sum_{i=1}^np_ix^i$, $r(x)=\sum_{i=1}^nr_ix^i$, $s(x)=\sum_{i=1}^ns_ix^i$.
    From $p(x)r(x)s(x)\equiv 0\mod (x^n-1)$ we have
    $$0=\sum_{d=0}^n\sum_{i+j+l\equiv d}p_ir_js_lx^d\quad\Longrightarrow\quad \sum_{i+j+l\equiv d}p_ir_js_l=0\mbox{ for all }d\le n-1.$$
    Therefore, modulo $(x^n-1,y^n-1,z^n-1)$ we have
    \begin{align*}
    a'(x,y,z)p(x)r(y)s(z)&=\sum_{i=0}^{n-1}\sum_{j=0}^{n-1}\sum_{l=0}^{n-1}x^iy^jz^l\sum_{i'=0}^n\sum_{j'=0}^n\sum_{l'=0}^n p_{i-i'}r_{j-j'}s_{l-l'}a'_{i'j'l'}\\
    &=\sum_{i=0}^{n-1}\sum_{j=0}^{n-1}\sum_{l=0}^{n-1}x^iy^jz^l\sum_{i'+j'+l'\equiv 0\mod n} p_{i-i'}r_{j-j'}s_{l-l'}\\
    &=\sum_{i=0}^{n-1}\sum_{j=0}^{n-1}\sum_{l=0}^{n-1}x^iy^jz^l\underbrace{\sum_{i''+j''+l''\equiv i+j+l\mod n} p_{i''}r_{j''}s_{l''}}_{=0}=0.
    \end{align*}
    (In the last line we used the substitutions $i'':=i-i'$, $j'':=j-j'$, $k'':=k-k'$).
    Thus, \eqref{eqn:a'-is-codeword} holds, hence \eqref{eqn:a-is-codeword} holds, therefore $a$ is a codeword of $C\boxplus C\boxplus C$ by Lemma \ref{lemma:sum-check}.

    By definition, $|a|=n^2$. Suppose $a=a_1+a_2+a_3$, where $a_1\in C\otimes \F_q^n\otimes \F_q^n$, $a_2\in \F_q^n\otimes C\otimes \F_q^n$, $a_3\in \F_q^n\otimes \F_q^n\otimes C$. Since each axis-parallel line in the cube $[n]^3$ covers only one non-zero element of $a_{ijl}$, we have $|a_1|_1+|a_2|_2+|a_3|_3\ge |a|=n^2$. 
    Taking into account $\|a\|=\frac{1}{n^3}|a|=\frac{1}{n}$, $\|a_i\|_i=\frac{1}{n^2}|a_i|_i$, we obtain 
    \[\sum_{i\in[3]}\|a_i\|_i=\frac{1}{n^2}\sum_{i\in[3]}|a_i|_i\ge 1 = n\|a\|.\]
    Therefore, $\rho(C,C,C)\le 1/n$.
\end{proof}
Lemma \ref{lemma:not-expanding} just proved shows that product expansion of the triple $(C,C,C)$ tends to zero as code length $n\to\infty$. 
Now let us combine known results to show that all tests $T_m^k$ are constantly robust for the code $C^{\otimes m}$ and $k\in[m-1]$ as $n\to\infty$.
First, we will show that the test $T_2^1$ is robust for code $C\otimes C$. Let us reformulate the theorem about robust testability of Reed-Solomon codes from \cite{Polishchuk:1994} for our case.
\begin{lemma}[{Corollary of \cite[Theorem 9]{Polishchuk:1994}
}]\label{th:RS-robust}
    Let $C$ be the $[n,k]$ primitive Reed-Solomon code over $\F_q$ defined by the check polynomial $(x-1)(x-\omega)\ldots (x-\omega^{k-1})$, where $n=q-1$, $k<n/2$, and $\omega$ is a primitive element of $\F_q$. Then for each $c_1\in C\otimes \F_q^n$, $c_2\in \F_q^n\otimes C$ if 
    $$\delta(c_1,c_2)\le \rbr{\frac 12-\frac kn}^2,$$ 
    then 
    $$\delta(c_1,C\otimes C)\le 2\delta(c_1,c_2),\qquad\delta(c_2,C\otimes C)\le 2\delta(c_1,c_2).$$
\end{lemma}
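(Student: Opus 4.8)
The plan is to reduce the statement to the bivariate Reed--Solomon testing theorem of Polishchuk and Spielman \cite[Theorem 9]{Polishchuk:1994} by a change of coordinates, and then read off the two inequalities. First I would show that $C$ is, up to a permutation of its $n$ coordinates, the standard Reed--Solomon code $\mathrm{RS}:=\{(P(\gamma))_{\gamma\in\F_q^*}\mid P\in\F_q[z],\ \deg P<k\}$. Indeed, since the check polynomial is $\prod_{i=0}^{k-1}(x-\omega^i)$, a vector $a=(a_i)_{i=0}^{n-1}$ lies in $C$ iff its discrete Fourier transform $\hat a(j):=\sum_i a_i\omega^{ij}$ vanishes for $j\in\{k,\dots,n-1\}$; inverting the transform---which is legitimate because $n=q-1$ is coprime to $\mathrm{char}\,\F_q$---gives $a_i=P(\omega^{-i})$ with $P(z)=\tfrac1n\sum_{j=0}^{k-1}\hat a(j)z^j$ of degree $<k$. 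Hence $i\mapsto\omega^{-i}$ is a bijection $[n]\to\F_q^*$ taking $C$ onto $\mathrm{RS}$; applying it in both coordinates of $[n]\times[n]$ identifies $C\otimes C$ with the evaluation tables on $(\F_q^*)^2$ of bivariate polynomials of degree $<k$ in each variable, $C\otimes\F_q^n$ with the tables all of whose lines parallel to the first axis are such univariate polynomials (call these the \emph{rows}), and $\F_q^n\otimes C$ with the tables all of whose lines parallel to the second axis are (the \emph{columns}). A coordinate permutation is a Hamming isometry, so it preserves $\delta(c_1,c_2)$, $\delta(c_1,C\otimes C)$ and $\delta(c_2,C\otimes C)$, and it suffices to prove the lemma for polynomial tables.

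In that language, write $r_v$ for the row polynomial of $c_1$ at $v$ and $s_u$ for the column polynomial of $c_2$ at $u$, all of degree $<k$ over index sets $S=T=\F_q^*$ of size $n$. The hypothesis is $\delta:=\delta(c_1,c_2)=\Pr_{(u,v)}[\,r_v(u)\ne s_u(v)\,]\le(\tfrac12-\tfrac kn)^2$, and \cite[Theorem 9]{Polishchuk:1994}, in the form we need, then yields a bivariate polynomial $Q$ with $\deg_x Q,\deg_y Q<k$ such that at most a $2\delta$-fraction of the rows $v$ have $Q(\cdot,v)\not\equiv r_v$ and at most a $2\delta$-fraction of the columns $u$ have $Q(u,\cdot)\not\equiv s_u$. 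Let $w$ be the evaluation table of $Q$; it is a codeword of $C\otimes C$ under the identification above. If $w(u,v)\ne c_1(u,v)$ then $Q(\cdot,v)\not\equiv r_v$, so $v$ is one of the $\le 2\delta\,n$ bad rows; hence the mismatch set of $w$ and $c_1$ lies in a union of bad rows and $\delta(c_1,w)\le 2\delta$. Symmetrically $\delta(c_2,w)\le 2\delta$. Therefore $\delta(c_i,C\otimes C)\le\delta(c_i,w)\le 2\delta(c_1,c_2)$ for $i\in\{1,2\}$.

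All the substance sits in \cite[Theorem 9]{Polishchuk:1994}, which I would use as a black box; the remaining work on our side is parameter bookkeeping: checking that the threshold in that theorem specializes to exactly $(\tfrac12-\tfrac kn)^2$ for $|S|=|T|=n$ and individual degree $<k$, that its size hypothesis is implied by $k<n/2$, and that its conclusion can be phrased per-row and per-column with constant $2$ (a mere per-grid-point agreement bound would be weaker by a factor $1/(1-k/n)$ and would not give the clean $2$). I expect this to be the only real obstacle: a self-contained attempt---e.g.\ interpolating $Q$ in $x$ through the $k$ least-corrupted columns of the grid---does not work, because those $k$ columns may together carry up to $\Theta(k\delta n)$ corrupted entries, which bounds the number of corrupted rows only by $\Theta(k\delta n)$, exceeding the total number $n$ of rows once $k=\Theta(n)$ and $\delta$ is a fixed constant (the regime the lemma must cover). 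One genuinely needs the polynomial-method argument underlying the Polishchuk--Spielman theorem.
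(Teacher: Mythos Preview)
Your proposal is correct and follows the same route as the paper: identify $C$ with a Reed--Solomon evaluation code via the discrete Fourier transform, then invoke \cite[Theorem 9]{Polishchuk:1994} as a black box on the resulting row/column polynomials. The paper in fact uses the per-grid-point form of the Polishchuk--Spielman conclusion (with an auxiliary parameter $\delta\in\bigl(\sqrt{\delta(c_1,c_2)},\tfrac12-\tfrac{d}{n}\bigr)$ followed by an infimum) and still obtains the clean constant~$2$, so your worry that only a per-row/per-column statement would suffice is unnecessary---$\delta(c_i,C\otimes C)$ is itself a per-point quantity.
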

\begin{proof}
    Using discrete Fourier transform \cite[Theorem 6.1.5]{book:AlgCodes}, it is not hard to show that each codeword $c\in C$ can be defined as the vector of values of some polynomial $p_c\in \F_q[x]$ of degree at most $d=k-1$ at points $\rbr{1,\omega^{-1},\omega^{-2},\ldots,\omega^{1-n}}$. 
    We will use \cite[Theorem 9]{Polishchuk:1994} for $X=Y=\{1,\omega,...,\omega^{n-1}\}$, $d=k-1$, $\delta \in I$, where $I$ is the interval $\bigrbr{\sqrt{\delta(c_1,c_2)},\frac 12-\frac d n}$. 
    Since $\sqrt{\delta(c_1,c_2)}\le \frac 12-\frac kn<\frac{1}{2}-\frac{d}{n}$, the interval $I$ is not empty.
    
    Each codeword $c_1\in C\otimes \F_q^n$ (resp., $c_2\in \F_q^n\otimes C$) is defined by the vector of values on $X\times Y$ of some bivariate polynomial $p_{c_1}(x,y)$ of degree\footnote{We say that a polynomial $p(x,y)$ has degree $(a,b)$ if it has degree at most $a$ in $x$ and degree at most $b$ in $y$} $(d,n)$ (resp., $p_{c_2}(x,y)$ of degree $(n,d)$).
    For $c,c'\in \F_q^n\otimes \F_q^n$ we can interpret $\delta(c,c')$ as $\PP_{(x,y)\in X\times Y}\fbr{p_{c}(x,y)\ne p_{c'}(x,y)}$. 
    Since $\delta\in I$, the conditions of \cite[Theorem 9]{Polishchuk:1994} hold. Applying this theorem to $p_{c_1}$ and $p_{c_2}$ there exist $p_c$ of degree $(d,d)$ such that
    \begin{equation*}
      \PP_{(x,y)\in X\times Y}\fbr{p_{c_1}(x,y)\ne p_{c}(x,y)\mbox{ or }p_{c_2}(x,y)\ne p_{c}(x,y)}\le 2\delta^2  
    \end{equation*}
    The corresponding word $c$ belongs to the product code $C\otimes C$, since the degree of $p_c$ in each variable is bounded by $d$.
    Therefore, we have
    $$\delta(c_1,C\otimes C)\le \delta(c_1,c)=\PP_{(x,y)\in X\times Y}\fbr{p_{c_1}(x,y)\ne p_{c}(x,y)}\le 2\delta^2.$$
    Taking the infinum over all $\delta\in I$, we have $\delta(c_1,C\otimes C)\le 2\delta(c_1,c_2)$. Similarly, $\delta(c_2,C\otimes C)\le 2\delta(c_1,c_2)$.
\end{proof}

\begin{corollary}\label{col:RS-robust}
    $\rho_r(T_2^1, C\otimes C)\ge \frac{1}{72}$.
\end{corollary}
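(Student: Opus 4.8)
The plan is to derive the corollary from Lemma~\ref{th:RS-robust} by the standard ``correct every axis-parallel line to its nearest codeword'' reduction, combined with a dichotomy on the size of the rejection probability. First I would fix an arbitrary $c\in\F_q^{[n]\times[n]}$ and set $\eta:=\EE_{I\in T_2^1}\delta(c|_I,(C\otimes C)|_I)$. Restricting a codeword of $C\otimes C$ to an axis-parallel line gives a codeword of $C$, and conversely every codeword of $C$ so arises (here $C$ is MDS of positive dimension), so $(C\otimes C)|_I$ is a copy of $C$ and $\delta(c|_I,(C\otimes C)|_I)=\delta(c|_I,C)$. Since every line has length $n$ and there are $n$ lines in direction $1$ and $n$ in direction $2$, the measure~\eqref{eqn:prob} is uniform on these $2n$ lines, so
\[
\eta=\tfrac12\,\EE_{\ell\in\cL_1}\delta(c|_\ell,C)+\tfrac12\,\EE_{\ell\in\cL_2}\delta(c|_\ell,C).
\]

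Next I would introduce the line-corrected words: let $c_1\in\cC^{(1)}=C\otimes\F_q^n$ be obtained from $c$ by replacing, for each $\ell\in\cL_1$, the string $c|_\ell$ by a nearest codeword of $C$, and define $c_2\in\cC^{(2)}=\F_q^n\otimes C$ analogously from the lines of $\cL_2$. Counting nonzero entries line by line gives $\delta(c,c_1)=\EE_{\ell\in\cL_1}\delta(c|_\ell,C)$ and $\delta(c,c_2)=\EE_{\ell\in\cL_2}\delta(c|_\ell,C)$, so $\eta=\tfrac12\bigl(\delta(c,c_1)+\delta(c,c_2)\bigr)$; in particular $\delta(c,c_1)\le 2\eta$ and, by the triangle inequality, $\delta(c_1,c_2)\le\delta(c,c_1)+\delta(c,c_2)=2\eta$.

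Finally I would split on the value of $\eta$. If $\eta>\tfrac1{72}$ then $\delta(c,C\otimes C)\le 1<72\,\eta$ trivially. If $\eta\le\tfrac1{72}$ then $\delta(c_1,c_2)\le 2\eta\le\tfrac1{36}=\bigl(\tfrac12-\tfrac kn\bigr)^2$ since $k=n/3$, so Lemma~\ref{th:RS-robust} applies to $c_1\in C\otimes\F_q^n$, $c_2\in\F_q^n\otimes C$ and gives $\delta(c_1,C\otimes C)\le 2\delta(c_1,c_2)\le 4\eta$; hence $\delta(c,C\otimes C)\le\delta(c,c_1)+\delta(c_1,C\otimes C)\le 6\eta\le 72\,\eta$. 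In both cases $\delta(c,C\otimes C)\le 72\,\eta$, which is precisely $\rho_r(T_2^1,C\otimes C)\ge\tfrac1{72}$. I do not anticipate any genuine difficulty: the only items needing care are the line-by-line counting identities for $\delta(c,c_i)$ and the choice of threshold $\tfrac12\bigl(\tfrac12-\tfrac kn\bigr)^2=\tfrac1{72}$, which is exactly what makes Lemma~\ref{th:RS-robust} applicable on the small-$\eta$ branch while the trivial bound $\delta(\cdot,C\otimes C)\le 1$ handles the other branch. All the real content sits in Lemma~\ref{th:RS-robust}, i.e.\ in the bivariate low-degree test of~\cite{Polishchuk:1994}, which we are entitled to assume.
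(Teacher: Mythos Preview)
Your proposal is correct and follows essentially the same approach as the paper: define the per-direction nearest codewords $c_1,c_2$, use the triangle inequality to bound $\delta(c_1,c_2)$ by the test's local deviation, split into a trivial case and a small-deviation case, and invoke Lemma~\ref{th:RS-robust} in the latter. The paper phrases the dichotomy in terms of $\alpha:=\delta(c,c_1)+\delta(c,c_2)=2\eta$ with threshold $\alpha=1/36$, which is exactly your threshold $\eta=1/72$; the arithmetic and the final bound coincide.
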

\begin{proof}
    Consider a word $x\in \F_q^{n\times n}$. Let $c_1$ and $c_2$ be the nearest words to $x$ from $C\otimes \F_q^n$ and $\F_q^n\otimes C$, respectively. Let $\alpha:=\delta(x,c_1)+\delta(x,c_2)$. We want to show that 
    \begin{equation*}
        \delta(x, C\otimes C)\le 36\rbr{\delta(x,C\otimes \F_q^n)+\delta(x,\F_q^n\otimes C)}.
    \end{equation*}
    By definition of $c_1$ and $c_2$ we have $\delta(x,c_1)=\delta(x,C\otimes \F_q^n)$, $\delta(x,c_2)=\delta(x,\F_q^n\otimes C)$, hence we need to prove that
    \begin{equation}\label{eqn:CxC-robust}
        \delta(x,C\otimes C)\le 36 \alpha.
    \end{equation}
    If $\alpha\ge \frac{1}{36}$, then \eqref{eqn:CxC-robust} holds. Now consider the main case $\alpha<\frac{1}{36}$. Since $C$ is $[n,k]$ code with $k=n/3$, in this case by the triangle inequality
    we have $\delta(c_1,c_2)\le\alpha<\frac{1}{36}=\rbr{\frac 12-\frac kn}^2$. 
    Hence, by Lemma \ref{th:RS-robust} we have 
    $$\delta(x,C\otimes C)\le \delta(x,c_1)+\delta(c_1,C\otimes C)\le \alpha+2\delta(c_1,c_2)\le 3\alpha.$$
    Thus, in this case \eqref{eqn:CxC-robust} holds as well, and the proof is complete.
\end{proof}

\newcommand{\cCn}[1]{\cC^{\widehat{(#1)}}}
\newcommand{\cCp}[1]{\cC^{(#1)}}
\newcommand{\prodC}[1][\cC]{\otimes #1}

\begin{lemma}[Robustness of test composition]\label{lemma:robust-combine}
   Let $\cC\subseteq \F_q^n$ and $1\le k_1<k_2<m$. Then
   \[\rho_r(T_m^{k_1},\cC^{\otimes m})\ge \rho_r(T_m^{k_2},\cC^{\otimes m})\rho_r(T_{k_2}^{k_1},\cC^{\otimes k_2}).\]
\end{lemma}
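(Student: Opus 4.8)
The plan is to view the axis-parallel $k_1$-flat test as a composition of two tests: first the $k_2$-flat test on $X=[n]^m$, and then, inside each sampled $k_2$-flat $F$, the $k_1$-flat test applied to the restriction $f|_F$, which lives on $F\cong[n]^{k_2}$. Since $\rho_r(T,\cC)\ge\alpha$ holds exactly when $\EE_{I}\delta(c|_I,\cC|_I)\ge\alpha\,\delta(c,\cC)$ for all $c$, it suffices to prove, for every $f\in\F_q^X$,
\[
\EE_{\ell\in T_m^{k_1}}\delta\bigrbr{f|_\ell,\cC^{\otimes m}|_\ell}\ \ge\ \rho_r(T_m^{k_2},\cC^{\otimes m})\,\rho_r(T_{k_2}^{k_1},\cC^{\otimes k_2})\,\delta\bigrbr{f,\cC^{\otimes m}}.
\]

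The first ingredient is a distributional identity. All $k$-flats of $X$ have $n^k$ points, so the measure $\PP$ from \eqref{eqn:prob} is uniform on $T_m^k(X)$, and likewise uniform on $T_{k_2}^{k_1}(F)$ for every $k_2$-flat $F$. A fixed $k_1$-flat $\ell$ with direction set $I$, $|I|=k_1$, is contained in exactly $\binom{m-k_1}{k_2-k_1}$ of the $k_2$-flats of $X$ (choose the $k_2-k_1$ extra directions; the position in the remaining coordinates is then forced), independently of $\ell$. Hence drawing $\ell\sim T_m^{k_1}(X)$ is the same as drawing $F\sim T_m^{k_2}(X)$ and then $\ell\sim T_{k_2}^{k_1}(F)$; combining this with transitivity of restriction, $\cC^{\otimes m}|_\ell=\bigrbr{\cC^{\otimes m}|_F}|_\ell$, we get
\[
\EE_{\ell\in T_m^{k_1}}\delta\bigrbr{f|_\ell,\cC^{\otimes m}|_\ell}\;=\;\EE_{F\in T_m^{k_2}}\ \EE_{\ell\in T_{k_2}^{k_1}(F)}\ \delta\bigrbr{(f|_F)|_\ell,\;D_F|_\ell},\qquad D_F:=\cC^{\otimes m}|_F.
\]

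The second ingredient shows $\rho_r\bigrbr{T_{k_2}^{k_1}(F),D_F}\ge\rho_r(T_{k_2}^{k_1},\cC^{\otimes k_2})$ for every $k_2$-flat $F$. Writing $S\subseteq[n]$ for the set of coordinates on which $\cC$ does not vanish identically, the restriction of any codeword of $\cC^{\otimes m}$ to an axis-parallel line lies in $\cC$, so every codeword of $\cC^{\otimes m}$ vanishes outside $S^m$, while a rank-one tensor $\bigotimes_{i=1}^m g_i$ with each $g_i\in\cC$ nonzero at a prescribed coordinate is nonzero there; hence for $F$ with direction set $J$ and position $p$, either some $p_j\notin S$ and $D_F=\{0\}$, or $p\in S^{[m]\setminus J}$ and $D_F=\cC^{\otimes k_2}$ under $F\cong[n]^{k_2}$ (the inclusion $D_F\subseteq\cC^{\otimes k_2}$ comes from restricting to the lines of $F$, and the reverse one from rescaling rank-one tensors $\bigotimes_{i\in J}g_i$ by codewords nonzero at the $p_j$'s). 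In the second case $D_F=\cC^{\otimes k_2}$ gives equality, and in the first case $D_F=\{0\}$ gives $\rho_r\bigrbr{T_{k_2}^{k_1}(F),\{0\}}=1\ge\rho_r(T_{k_2}^{k_1},\cC^{\otimes k_2})$ --- the latter since for these regular tests $\EE_I\delta(c|_I,\cC|_I)\le\delta(c,\cC)$ by the nearest-codeword estimate, so $\rho_r\le1$ always. Plugging $\EE_{\ell\in T_{k_2}^{k_1}(F)}\delta\bigrbr{(f|_F)|_\ell,D_F|_\ell}\ge\rho_r(T_{k_2}^{k_1},\cC^{\otimes k_2})\,\delta(f|_F,D_F)$ into the display above, and then applying the definition of $\rho_r(T_m^{k_2},\cC^{\otimes m})$ to $f$, we obtain
\[
\EE_{\ell\in T_m^{k_1}}\delta\bigrbr{f|_\ell,\cC^{\otimes m}|_\ell}\;\ge\;\rho_r(T_{k_2}^{k_1},\cC^{\otimes k_2})\,\EE_{F\in T_m^{k_2}}\delta\bigrbr{f|_F,D_F}\;\ge\;\rho_r(T_{k_2}^{k_1},\cC^{\otimes k_2})\,\rho_r(T_m^{k_2},\cC^{\otimes m})\,\delta\bigrbr{f,\cC^{\otimes m}},
\]
which is the desired inequality.

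I expect the only fussy parts to be the bookkeeping: matching the three sampling distributions exactly so no normalization constant leaks, and handling the degenerate alternative $D_F=\{0\}$, which occurs when $\cC$ has coordinates vanishing on the whole code. For the Reed--Solomon codes in Theorem \ref{th:main} this does not happen ($S=[n]$), so for that application one may simply take $D_F=\cC^{\otimes k_2}$ everywhere and ignore the dichotomy.
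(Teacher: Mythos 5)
Your proof is correct and takes essentially the same route as the paper's: view $T_m^{k_1}$ as the composition of $T_m^{k_2}$ with $T_{k_2}^{k_1}$ inside each sampled $k_2$-flat, chain the two robustness bounds, and use that restricting $\cC^{\otimes m}$ to a $k_2$-flat yields (essentially) $\cC^{\otimes k_2}$. The only difference is that you explicitly handle the degenerate alternative $D_F=\{0\}$ arising when $\cC$ has a coordinate on which every codeword vanishes, a subtlety the paper glosses over by asserting $\cC^{\otimes m}|_\pi\cong\cC^{\otimes k}$ unconditionally.
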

\begin{proof}
Fix $x\in (\F_q^n)^{\otimes m}$. For each $k\in[m-1]$ and $\pi\in T_m^{k}$ we have $\cCpow{m}|_\pi\cong \cCpow{k}$, hence
\[
\EE_{\pi\in T_m^{k}}\delta(x|_{\pi}, \cCpow{m}|_\pi)=\EE_{\pi\in T_m^{k}}\delta(x|_{\pi}, \cCpow{k_1}).
\]
Therefore, 
\begin{align*}
\delta(x, \cCpow{m})\rho_r(T_m^{k_2},\cCpow{m})\rho_r(T_{k_2}^{k_1},\cCpow{k_2})
\le &\EE_{\pi\in T_m^{k_2}}\delta(x|_\pi,\cCpow{k_2})\rho_r(T_{k_2}^{k_1},\cCpow{k_2})\\
\le &\EE_{\pi\in T_m^{k_2}}\EE_{\pi'\in T_{k_2}^{k_1}(\pi)}\delta(x|_{\pi'},\cCpow{k_1}) = \EE_{\pi'\in T_m^{k_1}}\delta(x|_{\pi'},\cCpow{k_1}).\qedhere
\end{align*}
\end{proof}

\begin{lemma}\label{lemma:robust-Tm1}
    Let $\cC\subseteq \F_q^n$. Denote $M:=\frac12(m-2)(m+3)=\sum_{k=3}^m k$. Then \[
    \rho_r(T_m^1,\cCpow{m})\ge \frac1{12^{m-2}}\cdot\rho_r(T_2^1,\cCpow{2})\cdot\delta(\cC)^{M}.
    \]
\end{lemma}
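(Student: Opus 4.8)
The plan is to argue by induction on $m$, peeling off one coordinate direction at a time using the test-composition inequality of Lemma~\ref{lemma:robust-combine}. The base case $m=2$ is trivial: there $M=0$ and $12^{m-2}=1$, so the asserted inequality reads $\rho_r(T_2^1,\cCpow{2})\ge\rho_r(T_2^1,\cCpow{2})$. For the inductive step, apply Lemma~\ref{lemma:robust-combine} with $k_1=1$ and $k_2=m-1$ to get
\[
\rho_r(T_m^1,\cCpow{m})\ \ge\ \rho_r(T_m^{m-1},\cCpow{m})\cdot\rho_r(T_{m-1}^1,\cCpow{m-1}),
\]
and bound the last factor by the induction hypothesis. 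Unwinding the recursion down to $m=2$ gives
\[
\rho_r(T_m^1,\cCpow{m})\ \ge\ \rho_r(T_2^1,\cCpow{2})\cdot\prod_{k=3}^{m}\rho_r\bigl(T_k^{k-1},\cCpow{k}\bigr),
\]
so the whole statement reduces to a single estimate: a lower bound on the robustness of the axis-parallel hyperplane test for a $k$-fold product.

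Concretely, the estimate I would prove is that for every $k\ge 3$,
\[
\rho_r\bigl(T_k^{k-1},\cCpow{k}\bigr)\ \ge\ \tfrac1{12}\,\delta(\cC)^{k}.
\]
Granting this, the product above telescopes, $\prod_{k=3}^m \tfrac1{12}\delta(\cC)^k=12^{-(m-2)}\delta(\cC)^{\sum_{k=3}^m k}=12^{-(m-2)}\delta(\cC)^{M}$, which is exactly the claimed bound; the identity $\sum_{k=3}^m k=\frac12(m-2)(m+3)=M$ is precisely what makes the per-step losses fit together. Observe that the hyperplane test is where Lemma~\ref{lemma:robust-combine} runs out (there is no integer strictly between $k-1$ and $k$), so this estimate is the genuine content; observe also that each step costs $\delta(\cC)^{k}$, the distance of the full $k$-fold product, rather than of the $(k-1)$-fold one.

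To prove the hyperplane-test bound I would follow the pattern of the proof of Corollary~\ref{col:RS-robust}, but with the Reed--Solomon-specific input replaced by the general correction argument for tensor codes. Fix $x\in\F_q^{[n]^k}$ and set $\beta_i:=\delta(x,\cCn{i})$, where $\cCn{i}=\bigcap_{j\ne i}\cCp{j}$ is the code with $\cC$ imposed in every direction except $i$; since restricting to a hyperplane orthogonal to direction $i$ and projecting onto $\cCpow{k}$ amounts to testing membership in $\cC^{\otimes(k-1)}$, one has $\EE_{\pi\in T_k^{k-1}}\delta(x|_\pi,\cCpow{k}|_\pi)=\frac1k\sum_i\beta_i$. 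If $\frac1k\sum_i\beta_i\ge\frac1{12}\delta(\cC)^k$ the bound is immediate from $\delta(x,\cCpow{k})\le 1$; otherwise relabel so that $\beta_1\le\dots\le\beta_k$, whence $\beta_1+\beta_2\le\frac2k\sum_i\beta_i$, and pick words $y\in\cCn{1}$, $z\in\cCn{2}$ nearest to $x$, so $y,z$ both lie in the slab code $\bigcap_{j\ge 3}\cCp{j}$ and $\delta(y,z)\le\beta_1+\beta_2$. Then run the standard two-direction correction-and-consensus argument (in the style of Polishchuk--Spielman and Dinur--Sudan--Wigderson): correct $y$ along direction $1$ and $z$ along direction $2$, and use that two distinct codewords of the fibre code $\cC^{\otimes(k-2)}$ disagree on at least a $\delta(\cC)^{k-2}$ fraction of their coordinates, together with the $\delta(\cC)^2$ of distance supplied by the two corrected directions, to show that the two corrected words agree outside an $O\bigl(\delta(y,z)/\delta(\cC)^k\bigr)$ fraction of $[n]^k$, that their common value is a codeword of $\cCpow{k}$, and that it lies within $O\bigl(\delta(y,z)/\delta(\cC)^k\bigr)$ of $x$. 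Tracking the constants — one factor $2$ from $\beta_1+\beta_2\le\frac2k\sum\beta_i$ and one from each directional correction — gives $\delta(x,\cCpow{k})\le\frac{12}{\delta(\cC)^k}\cdot\frac1k\sum_i\beta_i$, which is the claimed robustness.

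The hard part is exactly this last step. The composition machinery reduces the $m$-dimensional problem to the hyperplane test purely mechanically, but the hyperplane test admits no further decomposition, so one must carry out the full tensor-code correction argument and keep its constants sharp: a black-box appeal to \cite[Theorem 12.5]{Lin-func-prop} only yields the weaker factor $24^{-(m-2)}$, so obtaining $12^{-(m-2)}$ requires a careful (or mildly re-optimised) execution. For $k=3$ the estimate is exactly the robustness of the axis-parallel plane test for $\cC^{\otimes 3}$, which is the familiar base setting.
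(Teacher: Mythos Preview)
Your reduction is exactly the paper's: iterate Lemma~\ref{lemma:robust-combine} to factor $\rho_r(T_m^1,\cCpow{m})$ as $\rho_r(T_2^1,\cCpow{2})\prod_{k=3}^m\rho_r(T_k^{k-1},\cCpow{k})$, then plug in the hyperplane bound $\rho_r(T_k^{k-1},\cCpow{k})\ge\frac{1}{12}\delta(\cC)^k$. The only difference is that the paper does not prove the hyperplane bound at all but simply invokes \cite[Theorem~2.6]{Chiesa2020:hyperplane-robust}, so everything you wrote from ``To prove the hyperplane-test bound\ldots'' onward is unnecessary; your sketch there is in the right spirit but remains a sketch (the constant-tracking in particular is not actually carried out), and you can replace it by the citation.
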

\begin{proof}
    From \cite[Theorem 2.6]{Chiesa2020:hyperplane-robust} we have a lower bound on robustness of axis-parallel hyperplane test:
    \begin{equation}\label{eqn:hp-robust}
    \rho_r(T_k^{k-1}, \cCpow{k})\ge \frac{1}{12}\delta(\cC)^k.
    \end{equation}
    Applying Lemma \ref{lemma:robust-combine} repeatedly $m-2$ times, then using the inequality \eqref{eqn:hp-robust}, we obtain:
    \[
    \rho_r(T_m^1,\cCpow{m})\ge \rho_r(T_2^1,\cCpow{2})\prod_{k=3}^m\rho_r(T_k^{k-1},\cCpow{k})\ge \frac{1}{12^{m-2}}\cdot\rho_r(T_2^1,\cCpow{2})\cdot\delta(\cC)^{M}.\qedhere
    \]
\end{proof}
Now we are ready to prove Theorem \ref{th:main} and Proposition \ref{prop:main}.
\MainTheorem*
\begin{proof}
    Claim 1 of the theorem follows from Lemma \ref{lemma:not-expanding} and \cite[Lemma 11]{PK2022robust}:
    \[
    \rho(\underbrace{C,\ldots,C}_{m\ge 3\ \mbox{\scriptsize\rm times}})\le \rho(C,C,C)\le 1/n.
    \]
    Claim 2 of the theorem follows from Lemma \ref{lemma:robust-Tm1} 
    and Corollary \ref{col:RS-robust}. Recall that $C$ is $[n,\frac n3,\frac23 n+1]$ Reed-Solomon code, therefore $\delta(C)=\frac23+\frac1n$. Put $\alpha_r:=\frac{1}{72\cdot 12^{m-2}}\cdot \rbr{\frac23}^{\frac12(m-2)(m+3)}$. By Lemma \ref{lemma:robust-Tm1} and Corollary \ref{col:RS-robust} we have
    \[
    \rho_r(T_m^1,C^{\otimes m})
    \ge \rho_r(T_2^1, C^{\otimes 2})\cdot\frac{1}{12^{m-2}}\cdot\delta(C)^{\frac12(m-2)(m+3)} 
    >\frac{1}{72}\cdot\frac{1}{12^{m-2}}\cdot\rbr{\frac23}^{\frac12(m-2)(m+3)}=\alpha_r.
    \]
    Claim 3 of the theorem with $\alpha_a:=\frac23\frac{\alpha_r}{1+\alpha_r}$ follows from Claim 2 and Lemma \ref{lemma:robust=agreement}.
\end{proof}

\MainProp*
\begin{proof}
    Let $\rho:=\rho(\underbrace{\cC,\ldots,\cC}_{m\ \mbox{\scriptsize\rm times}})$. The proof is the sequence of following steps.
    \begin{itemize}
        \item By \cite[Lemma 11]{PK2022robust} we have $\rho(\cC,\cC)\ge\rho$, $\delta(\cC)=\rho(\cC)\ge\rho$.
        \item Using \cite[Lemma 1]{PK2022robust}, we obtain $\rho_a(\cCpow{2})\ge\rho(\cC,\cC)\ge\rho$.
        \item From \cite[Lemma 2.9]{Dinur:stoc2022} we have 
        \begin{equation*}
        \rho_r(T_2^1, \cCpow{2})\ge \frac{\rho_a(\cCpow{2})}{2(1+\rho_a(\cCpow{2}))}\ge \frac{\rho}{2(\rho+1)}\ge\frac14\rho.    
        \end{equation*}
        \item Finally, by Lemma \ref{lemma:robust-Tm1} we have \[
        \rho_r(T_m^1,\cCpow{m})\ge\rbr{\frac1{12}}^{m-2}\rho_r(T_2^1,\cCpow{2})\cdot\delta(\cC)^{\frac12(m-2)(m+3)}\ge \frac{1}{12^{m-2}}\cdot\frac{1}{4}\rho\cdot \rho^{\frac12(m-2)(m+3)}.
        \]
    \end{itemize}
    Thus, we obtain the required inequality with $\alpha(\rho)=\frac{1}{4\cdot 12^{m-2}}\rho^{\frac12(m-2)(m+3)+1}$.
\end{proof}

\section*{Acknowledgment}

This work was supported by the Ministry of Science and Higher Education of the Russian Federation (Grant 075-15-2020-801).

\bibliographystyle{IEEEtran}
\bibliography{codes.bib}

\appendix

\section{Relation between robust and agreement testability}\label{app:robust=agreement}
In this section we prove Lemma \ref{lemma:robust=agreement}, which states that robust and agreement testability are the same up to a constant factor for the axis-parallel line test for product codes.

\LemRobustAgreement*
\begin{proof}
    \textbf{1. Agreement testability implies robust testability.} Consider arbitrary $x\in \F_q^{n_1\times \cdots\times n_m}$. Let $y_i:=\argmin_{y\in \cC^{(i)}}\|y-x\|$ for all $i\in [m]$. There exists $z\in\otimes\cC$ such that 
    \[
    \rho_a\EE_{i\in[m]} \|y_i-z\|_i\le \EE_{i,j\in[m]}\|y_i-y_j\|.
    \]
    Denote $d_x:=\EE_{\ell\in T_m^1}\delta(x|_\ell, \otimes\cC|_{\ell})$. We have 
    \[
    d_x = \EE_{i\in[m]}\EE_{\ell\in\cL_i}\delta(x|_\ell, \cC_i)=\EE_{i\in[m]}\delta(x,\cC^{(i)})=\EE_{i\in[m]}\|x-y_i\|.
    \]
    Hence
    \begin{align*}
    \|x-z\| \le \EE_{i\in[m]}(\|x-y_i\|+\underbrace{\|y_i-z\|}_{\le\|y_i-z\|_i}) 
    &\le d_x+\frac1{\rho_a}\EE_{i,j\in[m]}\|y_i-y_j\|\\
    &\le d_x+\frac1{\rho_a}\cdot 2\EE_{i\in [m]}\|x-y_i\|=d_x\rbr{1+\frac{2}{\rho_a}}\le \frac{4}{\rho_a}d_x.
    \end{align*}
    Therefore, $\rho_r\ge \rho_a/4$.
    
    \textbf{2. Robust testability implies agreement testability.} 
    Consider arbitrary words $c_i\in \cC^{(i)}$ for $i\in [m]$.
    Let 
    $$i_0:=\argmin_{i\in[m]}\EE_{j\in [m]}\delta(c_i, c_j).$$
    Thus we have
    \begin{equation}\label{eqn:cI0-J}
    \EE_{j\in[m]}\|c_{i_0}- c_j\|\le \EE_{i,j\in[m]}\|c_i - c_j\|.
    \end{equation}
    Since the test $T_m^1$ is $\rho_r$-robust for $\otimes\cC$, there exists $c\in\otimes\cC$ such that 
    \begin{equation}\label{eqn:cI0-c}
    \rho_r \|c_{i_0}-c\|\le \EE_{\ell\in T_{m}^1}\delta(c_{i_0}|_{\ell}, \cC|_{\ell})= \EE_{j\in [m]}\delta(c_{i_0}, \cC^{(j)})\le \EE_{j\in[m]}\|c_{i_0} - c_j\|\le \EE_{i,j\in[m]}\|c_i- c_j\|.
    \end{equation}
    Let $\delta_*:=\min_{i\in[m]}\delta(\cC_i)$. For $i\in[m]$, $x\in\cC^{(i)}$ we have $\delta(\cC_i)\|x\|_i\le \|x\|$. Hence, applying the triangle inequality, followed by \eqref{eqn:cI0-J} and \eqref{eqn:cI0-c}, we get:
    \[
    \delta_*\EE_{i\in[m]}\|c_i-c\|_i\le \EE_{i\in[m]}\|c_i-c\|\le \|c_{i_0}-c\|+\EE_{i\in[m]}\|c_i-c_{i_0}\|
    \le \rbr{1+\frac1{\rho_r}}\EE_{i,j\in[m]}\|c_i-c_j\|.
    \]
    Therefore, 
    $\rho_a(T_m^k, \cC^{\otimes m})\ge\delta_*(1+\frac1{\rho_r})^{-1}$.
\end{proof}
\end{document}